\newtheorem{theorem}{Theorem}[section]
\newtheorem{lemma}[theorem]{Lemma}
\newtheorem{corollary}[theorem]{Corollary}
\numberwithin{table}{section}
\numberwithin{figure}{section}
\numberwithin{equation}{section}
\title{Autoreducibility of NP-Complete Sets\footnote{This research was support in part by NSF grant
    0917417.}}
\author{
  John M. Hitchcock
  and
  Hadi Shafei\\
  Department of Computer Science\\
  University of Wyoming
}
\date{}
\newcommand{\INC}{}
\newenvironment{lemma_cite}[1]
{\begin{lemma}  {\rm (#1)}}
{\end{lemma}}
\newcommand{\cupdot}{\dot\cup}
\newcommand{\QUERY}{\mathrm{QUERY}}
\newcommand{\pgenericity}{$\p$-genericity\xspace}
\newcommand{\pgeneric}{$\p$-generic\xspace}
\newcommand{\dtt}{\mathrm{dtt}}
\newcommand{\pair}[1]{\langle #1 \rangle}
\newcommand{\N}{\mathbb{N}}
\newcommand{\NP}{{\ensuremath{\mathrm{NP}}}} 
\newcommand{\calR}{{\cal R}}
\newcommand{\calC}{{\cal C}}
\newcommand{\EXP}{{\rm EXP}}
\newcommand{\NEXP}{{\rm NEXP}}
\newcommand{\p}{{\mathrm{p}}}
\renewcommand{\P}{\ensuremath{{\mathrm P}}}
\newcommand{\coNP}{\co{\NP}}
\newcommand{\co}[1]{\mathrm{co}#1}
\newcommand{\SAT}{{\rm SAT}}
\newcommand{\DTIME}{\mathrm{DTIME}}
\newcommand{\m}{\mathrm{m}}
\newcommand{\btt}{\mathrm{btt}}
\renewcommand{\tt}{\mathrm{tt}}
\newcommand{\T}{\mathrm{T}}
\newcommand{\leqpm}{\leqp_\mathrm{m}}
\newcommand{\leqp}{\leq^\p}
\begin{document}
\maketitle

\begin{abstract}

  We study the polynomial-time autoreducibility of $\NP$-complete sets
  and obtain separations under strong hypotheses for $\NP$. Assuming
  there is a \pgeneric set in $\NP$, we show the following:
  \begin{itemize}
    \item For every $k \geq 2$, there is a $k$-T-complete set for
      $\NP$ that is
      $k$-T autoreducible, but is not $k$-tt autoreducible or $(k-1)$-T
      autoreducible.
      \item For every $k \geq 3$, there is a $k$-tt-complete set for
        $\NP$ that
        is $k$-tt autoreducible, but is not $(k-1)$-tt autoreducible
        or $(k-2)$-T autoreducible.
    \item There is a tt-complete set for $\NP$ that is tt-autoreducible,
        but is not btt-autoreducible.
  \end{itemize}
  
\noindent
  Under the stronger assumption that there is a \pgeneric set in
  $\NP\cap\coNP$, we show:
  \begin{itemize}
    \item For every $k \geq 2$, there is a $k$-tt-complete set for
      $\NP$ that is $k$-tt autoreducible, but is not $(k-1)$-T
      autoreducible.
  \end{itemize}
  
\noindent
Our proofs are based on constructions from separating NP-completeness
notions. For example, the construction of a 2-T-complete set for
$\NP$ that is not 2-tt-complete also separates 2-T-autoreducibility
from 2-tt-autoreducibility.

\end{abstract}

\section{Introduction}

Autoreducibility measures the redundancy of a set. 
For a reducibility $\calR$, a set $A$ is $\calR$-autoreducible if
there is a $\calR$-reduction from $A$ to $A$ where the instance is
never queried \cite{Tra70}.  Understanding the autoreducibility of
complete sets is important because of applications to separating
complexity classes \cite{BFvMT00}. We study the polynomial-time
autoreducibility \cite{Amb83} of $\NP$-complete sets.

Natural problems are paddable and easily shown to be
m-autoreducible. In fact, Gla{\ss}er et
al. \cite{Glasser07} showed that all nontrivial m-complete sets for $\NP$
and many other complexity classes are m-autoreducible.  Beigel and
Feigenbaum \cite{BeiFei92} showed that T-complete sets for $\NP$ and
the levels of the polynomial-time hierarchy are T-autoreducible.
We focus on intermediate reducibilities between many-one and Turing.

Previous work has studied separations of these autoreducibility
notions for larger complexity classes. Buhrman et al. \cite{BFvMT00}
showed there is a 3-tt-complete set for $\EXP$ that is not
btt-autoreducible. For $\NEXP$, Nguyen and Selman
\cite{NguyenSelman:STACS14} showed there is a 2-T-complete set that is
not 2-tt-autoreducible and a tt-complete set that is not
btt-autoreducible.  We investigate whether similar separations hold
for $\NP$.

Since all $\NP$ sets are 1-tt-autoreducible if $\P = \NP$, it
is necessary to use a hypothesis at least as strong as $\P \neq \NP$
to separate autoreducibility notions.  We work with the {\em
  Genericity Hypothesis} that there is a \pgeneric set in $\NP$
\cite{AmFlHu87,AmbBen00}. This is stronger than $\P \neq \NP$, but weaker
than the {\em Measure Hypothesis} \cite{Lutz:CvKL,Hitchcock:HHDCC}
that there is a p-random set in $\NP$. Under the Genericity
Hypothesis, we separate many autoreducibility notions for
$\NP$-complete sets. Our main results are summarized in Table \ref{table:results}.

Previous work has used the measure and genericity hypotheses to
separate completeness notions for $\NP$. Consider the set $$C = G
\dot\cup (G \cap \SAT) \dot\cup (G \cup \SAT),$$ where $G \in \NP$ and
$\cupdot$ is disjoint union. Then $C$ is 2-T-complete for $\NP$, and
if $G$ is \pgeneric, $C$ is not 2-tt-complete
\cite{Lutz:CvKL,AmbBen00}. There is a straightforward 3-T (also 5-tt)
autoreduction of $C$ based on padding $\SAT$.\footnote{Given an
  instance $x$ of $C$, pad $x$ to an instance $y$ such that
  $\SAT[x]=\SAT[y]$. We query $G[y]$ and then query either $G \cap
  \SAT[y]$ if $G[y] = 1$ or $G \cup \SAT[y]$ if $G[y]=0$ to learn
  $\SAT[y]$. Finally, if our instance is $G[x]$ the answer is obtained
  by querying $G\cap\SAT[x]$ if $\SAT[y]=1$ or by querying
  $G\cup\SAT[x]$ if $\SAT[y]=0$. If our instance is $G\cup\SAT[x]$ or
  $G\cap\SAT[x]$, we query $G[x]$ and combine that answer with
  $\SAT[y]$.}  However, since $C$ is 2-T-honest-complete, we
indirectly obtain a 2-T (also 3-tt) autoreduction by first reducing
through $\SAT$ (Lemma \ref{le:honest auto}). In Theorem \ref{th:2T vs
  2tt} we show $C$ is not 2tt-autoreducible.

It turns out this idea works in general. We show that many sets which
separate completeness notions also separate autoreducibility
notions. Ambos-Spies and Bentzien \cite{AmbBen00} also separated both
$k$-T-completeness and $(k+1)$-tt-completeness from both
$k$-tt-completeness and $(k-1)$-T-completeness for every $k \geq 3$
under the Genericity Hypothesis. We show that the same sets also
separate $k$-T-autoreducibility and $(k+1)$-tt-autoreducibility from
$k$-tt-autoreducibility and $(k-1)$-T-autoreducibility (Theorems
\ref{th:k-tt} and \ref{th:k-1-T}).  We also obtain that there is a
tt-complete set for $\NP$ that is tt-autoreducible and not
btt-autoreducible (Theorem \ref{th:btt vs tt}), again using a
construction of Ambos-Spies and Bentzien.

In the aforementioned results, there is a gap -- we only separate
$k$-tt-autoreducibility from $(k-2)$-T-autoreducibility (for $k \geq
3$), where we can hope for a separation from
$(k-1)$-T-autoreducibility. The separation of $k$-tt from $(k-1)$-T is
also open for completeness under the Genericity Hypothesis (or the
Measure Hypothesis). To address this gap, we use a stronger hypothesis
on the class $\NP \cap \coNP$.  Pavan and Selman \cite{PavSel04}
showed that if $\NP \cap \coNP$ contains a
$\DTIME(2^{n^\epsilon})$-bi-immune set, then 2-tt-completeness is
different from 1-tt-completeness for $\NP$. We show that if $\NP \cap
\coNP$ contains a \pgeneric set, then $k$-tt-completeness is different
from $(k-1)$-T-completeness for all $k \geq 3$ (Theorem \ref{th:k-tt
  vs (k-1)-T complete}). We then show these constructions also
separate autoreducibility: if there is a \pgeneric set in
$\NP\cap\coNP$, then for every $k \geq 2$, there is a $k$-tt-complete
set for $\NP$ that is $k$-tt autoreducible, but is not $(k-1)$-T
autoreducible (Theorems \ref{th:2tt vs 1tt} and \ref{th:k-tt vs (k-1)-T autored}).

This paper is organized as follows. Preliminaries are in Section
\ref{sec:prelim}. The results using the Genericity Hypothesis are
presented in Section \ref{sec:genericity}. We use the stronger
hypothesis on $\NP\cap\coNP$ in Section \ref{sec:stronger}. Section
\ref{sec:conclusion} concludes with some open problems.

\newcommand{\calS}{\mathcal{S}}

\begin{table}[ht!]
\begin{center}
\begin{tabular}{|c|c|c|c|}
\hline
$\calC$ & $\calS$ & $\calR$ & notes \\
\hline
\hline
$\NP$ & $k$-T & $k$-tt & 
Theorem \ref{th:2T vs 2tt} ($k=2$), Theorem
\ref{th:k-tt} ($k \geq 3)$\\
\hline
$\NP$ & $k$-T & $(k-1)$-T & 
Theorem \ref{th:2T vs 2tt} ($k=2$), Theorem
\ref{th:k-1-T} ($k \geq 3)$\\
\hline
$\NP$ & $k$-tt & $(k-1)$-tt & Corollary \ref{co:3tt vs 2tt} ($k=3$), Theorem \ref{th:k-tt} ($k \geq 4$)\\
\hline
$\NP$ & $k$-tt & $(k-2)$-T & Corollary \ref{co:3tt vs 1T} ($k=3$), Theorem \ref{th:k-1-T} ($k \geq 4$)\\
\hline
$\NP$ & tt & btt & Theorem \ref{th:btt vs tt}\\
\hline
$\NP \cap \coNP$ & $k$-tt & $(k-1)$-T & Theorem \ref{th:2tt vs 1tt}
($k = 2$), Theorem \ref{th:k-tt vs (k-1)-T autored} ($k \geq 3$) \\
\hline
\end{tabular}
\caption{If $\calC$ contains a \pgeneric set, then there is a $\calS$-complete
set in $\NP$ that is $\calS$-autoreducible but not $\calR$-autoreducible.}
\label{table:results}\end{center}
\end{table}

\section{Preliminaries}\label{sec:prelim}

We use the standard enumeration of binary strings, i.e $s_0 = \lambda,
s_1 = 0, s_2 = 1,s_3 = 00,...$ as an order on binary strings. All
languages in this paper are subsets of $\{0,1\}^*$ identified with
their characteristic sequences. In other words, every language $A \in
\{0,1\}^*$ is identified with $\chi_A = A[s_0]A[s_1]A[s_2]...$. If $X$
is a set, equivalently a binary sequence, and $x \in \{0,1\}^*$ then
$X\upharpoonright x$ is the initial segment of $X$ for all strings
before $x$, i.e the subset of $X$ that contains every $y \in X$ that
$y < x$.

All reductions in this paper are polynomial-time reductions, therefore
we may not emphasize this every time we define a reduction. 
We use standard notions of reducibilities \cite{LaLySe75}.

Given $A$, $B$, and $\calR \in
\{\m,\;\T,\;\tt,\;k\text{-}\T,\;k\text{-}\tt,\;\btt\}$, $A$ is {\em
  polynomial-time $\calR$-honest reducible} to $B$ ($A
\le_{\calR\text{-}h}^\p$) if $A \le_\calR^\p$ and there exist a
constant $c$ such that for every input $x$, every query $q$ asked from
$B$ has the property $|x|^{1/c} < |q|$. In particular, a reduction
$\calR$ is called {\em length-increasing} if on every input the
queries asked from the oracle are all longer than the input.

For any reduction $\calR \in \{\m,\;\T,\;\tt,\;k\text{-}\T,\;k\text{-}\tt,\;\btt\}$ a 
language $A$ is $\calR$-{\em autoreducible} if $A \leq^\p_\calR$ via a
reduction where on every instance $x$, $x$ is not queried.

The following lemma states that any honest-complete set for $\NP$ is
also autoreducible under the same type of reduction. This follows
because $\NP$ has a paddable, length-increasing complete set.

\begin{lemma}
\label{le:honest auto}
Let $\calR \in \{\m,\;\T,\;\tt,\;k\text{-}\T,\;k\text{-}\tt,\;\btt,\ \ldots\}$
be a reducibility. Then every $\calR$-honest-complete set for $\NP$
 is $\calR$-autoreducible.
\end{lemma}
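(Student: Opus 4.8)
The plan is to exploit the fact that $\NP$ has a complete set $K$ that is both paddable and length-increasing-complete (for instance a suitable encoding of $\SAT$ via a length-increasing, paddable complete set). Fix such a $K$ together with a padding function $\mathrm{pad}$ and a length-increasing many-one reduction $g$ of every $\NP$ set to $K$. Now let $A$ be any $\calR$-honest-complete set for $\NP$. Since $A \in \NP$, there is an $\calR$-honest reduction $h$ from $A$ to $A$'s completeness does not directly help; instead, because $A$ is $\calR$-honest-\emph{complete}, there is an $\calR$-honest reduction from $K$ to $A$, and because $A \in \NP$ there is a many-one (in particular $\calR$) reduction from $A$ to $K$. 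Composing, we get an $\calR$ reduction $r$ from $A$ to $A$ that factors through $K$: first map the instance $x$ of $A$ to an instance $w = f(x)$ of $K$, then apply the $\calR$-honest reduction from $K$ to $A$.

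The key point is to arrange that none of the queries made by $r$ on input $x$ equals $x$ itself. First I would use honesty: the reduction from $K$ to $A$ is $\calR$-honest, so on input $w$ every query $q$ to $A$ satisfies $|w|^{1/c} < |q|$ for a fixed constant $c$. Second, I would use padding on $K$ to blow up the length of $w$: replace $w = f(x)$ by $w' = \mathrm{pad}(w, 1^m)$ for $m$ a fixed polynomial in $|x|$ large enough that $|w'|^{1/c} > |x|$. Since $\mathrm{pad}$ preserves membership in $K$, $w' \in K \iff w \in K \iff x \in A$, so the composed reduction still decides $A$ correctly; and now every query $q$ it makes to $A$ has length $|q| > |w'|^{1/c} > |x|$, hence $q \neq x$. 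Thus $r$ is an $\calR$-autoreduction of $A$. The reduction type is preserved throughout: composing a many-one reduction (from $A$ to $K$), then padding (many-one), then the $\calR$-honest reduction (from $K$ to $A$) yields a reduction of the same type $\calR$ — for the bounded-query classes $k$-$\T$, $k$-$\tt$, $\btt$ the query bound is inherited from the final $\calR$-reduction since the first two stages are many-one, and for $\T$, $\tt$, $\m$ this is immediate.

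The main obstacle is purely bookkeeping: verifying that composition does not increase the number of queries beyond what $\calR$ allows, and that the honesty constant from the $K$-to-$A$ reduction can be beaten by a fixed polynomial amount of padding uniformly in $|x|$. The former is handled by noting the $A$-to-$K$ step and the padding step are single-query many-one maps, so the total query count on input $x$ equals the query count of the $\calR$-honest reduction on input $w'$. The latter is handled because $\mathrm{pad}(w,1^m)$ has length at least $m$ (or $|w| + m$), so choosing $m = |x|^{2c}$, say, gives $|w'|^{1/c} \geq |x|^2 > |x|$ for all but finitely many $x$, and the finitely many exceptions are absorbed into the reduction by table lookup.
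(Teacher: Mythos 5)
Your proposal is correct and follows essentially the same route as the paper's proof: reduce $A$ to a paddable, length-increasing $\NP$-complete set, pad the resulting instance so its length exceeds $|x|^c$, and then apply the $\calR$-honest reduction back to $A$, so that honesty guarantees every query has length exceeding $|x|$. The only cosmetic difference is that you use a two-argument padding function with padding amount chosen as a polynomial in $|x|$, whereas the paper composes a length-increasing reduction to $\SAT$ with a one-argument padding function whose output length exceeds a fixed power of its input length; both yield the same length lower bound on the queries, and your table-lookup remark for small $|x|$ is unnecessary in the paper's version but harmless in yours.
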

\begin{proof}
Let $A \in \NP$ be $\calR$-honest-complete. Then there is an $\calR$-honest
reduction $M$ from $\SAT$ to $A$. There exists $m \geq 1$ such that every
query $q$ output by $M$ on an instance $x$ satisfies $|q| \geq
|x|^{\tfrac{1}{m}}$.

Since $\SAT$ is $\NP$-complete via length-increasing many-one
reductions, $A \leqpm \SAT$ via a length-increasing reduction $g$.
Since $\SAT$ is paddable, there is a polynomial-time function $h$ such
that for any $y$, $\SAT[h(y)] = \SAT[y]$ and $|h(y)| > |y|^m$.

To obtain our $\calR$-autoreduction of $A$, we combine $g$, $h$, and $M$. 
On instance $x$ of $A$, compute the instance $h(g(x))$ of $\SAT$ and use $M$ to reduce $h(g(x))$ to
$A$. Since $|h(g(x))| > |g(x)|^m > |x|^m$, every query $q$ of $M$
has $|q| > |h(g(x))|^{\tfrac{1}{m}} > |x|$. Therefore all queries are different
than $x$ and this is an autoreduction.
\end{proof}

Most of the results in this paper are based on a non-smallness
hypothesis for $\NP$ called the {\em Genericity Hypothesis} that $\NP$
contains a $\p$-generic set \cite{AmFlHu87,AmbBen00}. In order to define genericity first we need to define what a 
{\em simple extension function} is.
 For any $k$, a simple $n^k$-extension function is a partial function from 
$\{0,1\}^*$ to $\{0,1\}$ that is computable in $O(n^k)$. Given a set $A$ and an extension function $f$ 
we say that $f$ is {\em dense along} $A$ if $f$ is defined on infinitely many initial segments of $A$. 
A set $A$ {\em meets} a simple extension function $f$ at $x$ if $f (A \upharpoonright x)$ is defined and
equal to $A[x]$. We say $A$ meets $f$ if $A$ meets $f$ at some $x$.
A set $G$ is called {\em \pgeneric} if it meets every simple $n^k$-extension function for any $k \geq 1$ \cite{AmbBen00}. 
A partial function $f:\{0,1\}^* \rightarrow (\{0,1\}^* \times
\{0,1\})^*$ is called a {\em k-bounded extension function} if whenever $f(X \upharpoonright x)$ is defined, 
$f(X \upharpoonright x) = (y_0,i_0)...(y_m,i_m)$ for some $m < k$, and $x \le y_0 < y_1<...<y_m$, 
where $y_j$'s are strings and $i_j$'s are either $0$ or $1$. A set $A$ meets $f$ at $x$ if 
$f(A \upharpoonright x)$ is defined, and $A$ agrees with $f$ on all  $y_j$'s, i.e. 
if $f(A \upharpoonright x) = (y_0,i_0)...(y_m,i_m)$ then $A[y_j] = i_j$ for all $j \le m$ \cite{AmbBen00}.

We will use the following routine extension of a lemma in \cite{AmbBen00}.

\begin{lemma}\label{extension function revised}
Let $l,c \geq 1$ and let $f$ be an $l$-bounded partial extension function defined on initial segments 
$\alpha = X \upharpoonright 0^n$ of length $2^n$ $(n \geq 1)$. Whenever $f(\alpha)$ is defined we have
\[ 
f(\alpha) = (y_{\alpha , 1} , i_{\alpha , 1}),...,(y_{\alpha , l_{\alpha}} , i_{\alpha , l{\alpha}}),
\]
where $l_{\alpha} \leq l$, $pos(\alpha) = (y_{\alpha , 1},...,y_{\alpha,l_{\alpha}})$ is computable in 
$2^{cn}$ steps and $i_{\alpha , j}$ is computable in $2^{c|y_{\alpha , j}|}$ steps. 
Then for every \pgeneric set $G$, if $f$ is dense along $G$ then $G$ meets $f$.
\end{lemma}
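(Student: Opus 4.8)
The plan is to reduce the statement to the definition of \pgenericity by manufacturing, from the given $l$-bounded extension function $f$, an ordinary simple $n^k$-extension function $\hat f$ that $G$ is forced to meet, and then to argue that meeting $\hat f$ entails meeting $f$. The key observation is that an $l$-bounded extension function specifies values at up to $l$ positions $y_{\alpha,1} < \cdots < y_{\alpha,l_\alpha}$, all at or beyond the string $0^n$ where $f(\alpha)$ is defined, whereas a simple extension function may only set the single bit at the current string. So I would handle the queried positions one at a time, in increasing order, using the fact that $f$ is defined on initial segments of a very regular form (length exactly $2^n$, i.e. the segments ending just before $0^n$).

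First I would define $\hat f$ as follows: on an initial segment $\beta = X \upharpoonright z$, check whether $z$ lies in the "active zone" of some $\alpha = X \upharpoonright 0^n$ on which $f$ is defined — concretely, recompute $n$ from $|\beta|$ (or from $z$), let $\alpha$ be the length-$2^n$ prefix of $\beta$, evaluate $\mathit{pos}(\alpha) = (y_{\alpha,1},\dots,y_{\alpha,l_\alpha})$, and if $z = y_{\alpha,j}$ for some $j$ — and moreover the earlier required bits are already consistent, i.e. $X[y_{\alpha,1}] = i_{\alpha,1}, \dots, X[y_{\alpha,j-1}] = i_{\alpha,j-1}$ — then output $i_{\alpha,j}$; otherwise leave $\hat f$ undefined. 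The running-time bookkeeping is exactly what the hypothesis is designed for: $\mathit{pos}(\alpha)$ costs $2^{cn}$ steps, which is polynomial in $|\beta| \geq 2^n$; each bit $i_{\alpha,j}$ costs $2^{c|y_{\alpha,j}|}$ steps, and since $|\beta| > y_{\alpha,j} \geq 0^n$ this is again polynomial in the length of the input to $\hat f$; and checking the consistency of the earlier bits is just reading positions already present in $\beta$. Hence $\hat f$ is a simple $n^{k}$-extension function for a suitable $k = k(c,l)$.

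Next I would argue density is preserved: if $f$ is dense along $G$, then there are infinitely many $n$ with $f(G \upharpoonright 0^n)$ defined; for each such $n$, walking forward from $0^n$, either $G$ already agrees with all of $i_{\alpha,1},\dots,i_{\alpha,l_\alpha}$ — in which case $G$ meets $f$ at $0^n$ and we are done — or there is a least $j$ where $G$ first disagrees, but then $\hat f$ is defined on $G \upharpoonright y_{\alpha,j}$ (the earlier bits matched), so $\hat f$ is dense along $G$. By \pgenericity, $G$ meets $\hat f$ at some string $z$, meaning $\hat f(G \upharpoonright z)$ is defined and equals $G[z]$. By construction, $z = y_{\alpha,j}$ for the associated $\alpha,j$, $G$ agrees with $i_{\alpha,1},\dots,i_{\alpha,j-1}$, and $G[z] = i_{\alpha,j}$ — so in fact $G$ agrees with $f$ through position $j$ as well. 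Iterating this reasoning (formally: if $G$ failed to meet $f$ at $0^n$ for every such $n$, take the least disagreement position; the argument just given shows $G$ must agree there, a contradiction), I conclude $G$ meets $f$.

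The main obstacle, and the only place requiring care, is the interleaving of "which $\alpha$ is responsible for position $z$" with the genericity argument: a single string $z$ could in principle be the $j$-th query string for several different defining segments $\alpha$, so $\hat f$ must pin down a canonical choice (e.g. derive $n$ deterministically from $|\beta|$, exploiting that defining segments have length exactly $2^n$ so distinct $n$ give disjoint length classes), and the "earlier bits already consistent" guard must be phrased so that the value $\hat f$ outputs is unambiguous and so that meeting $\hat f$ genuinely propagates agreement with $f$ rather than merely agreement at one isolated bit. Once the canonical association is fixed, the contradiction-at-the-least-disagreement argument closes cleanly; the time bounds are immediate from the hypotheses on $\mathit{pos}(\alpha)$ and $i_{\alpha,j}$. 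This is essentially the proof in \cite{AmbBen00} with the bookkeeping adjusted to the stated resource bounds, hence "routine."
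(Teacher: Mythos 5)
Your high-level plan --- replace the $l$-bounded $f$ by a simple extension function $\hat f$ that sets one bit at a time, with a ``the earlier required bits are already consistent'' guard --- is the right idea, and the time-bound bookkeeping is sound. But the closing step has a genuine gap. Meeting $\hat f$ at one string $z^*$, with canonical associated $\alpha^*$ and index $j^*$, tells you only that $G$ agrees with $f(\alpha^*)$ on the first $j^*$ of its $l_{\alpha^*}$ query positions. Unless $j^* = l_{\alpha^*}$, you have not met $f$ anywhere. Your parenthetical ``take the least disagreement position; the argument just given shows $G$ must agree there'' is a quantifier slip: \pgenericity hands you one meeting at one $z^*$ of $\hat f$'s choosing, not agreement at every least-disagreement position you might point at, and nothing forces the $z^*$ you are handed to be the last query position of its $\alpha^*$. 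A single application of genericity advances agreement by one slot for one $\alpha$; it does not iterate itself.

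To close the gap you need the potential-function step that \cite{AmbBen00} actually uses. Suppose $G$ never meets $f$. For each $n$ with $f(G \upharpoonright 0^n)$ defined, let $d_n < l_{\alpha_n} \le l$ be the length of the longest agreeing prefix of $f(\alpha_n)$ in $G$, and let $d^*$ be the largest value taken by $d_n$ infinitely often. Define $\hat f_{d^*}$ to be your $\hat f$ restricted to inputs whose canonical index $j$ equals $d^*+1$; this is a dense simple extension function along $G$, and by the usual restriction trick a \pgeneric $G$ meets it infinitely often. Each meeting yields a distinct $n^*$ with $d_{n^*} \ge d^*+1$ --- distinct because, once the canonical rule is fixed, the canonical $n^*$ uniquely determines $z^* = y_{\alpha^*,d^*+1}$, so the map from meeting points to $n^*$ is injective --- contradicting the maximality of $d^*$. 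Separately, a small repair: $n$ cannot be ``derived deterministically from $|\beta|$,'' since $|\beta|$ is governed by $z$ rather than by $0^n$ (many $n$ with $2^n \le |\beta|$ are candidates). You need an explicit tie-break, e.g.\ the least qualifying $n$, and it is precisely this tie-break that makes the injectivity above go through. With these two fixes your density argument and running-time estimates stand as written.
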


\section{Autoreducibility Under the Genericity Hypothesis}\label{sec:genericity}

We begin by showing the Genericity Hypothesis implies there is a
$2$-$\T$-complete set that separates $2$-$\T$-autoreducibility from
$2$-$\tt$-autoreducibility. The proof utilizes the construction of
\cite{Lutz:CvKL,AmbBen00} that of a set that separates
$2$-$\T$-completeness from $2$-$\tt$-completeness.

\begin{theorem}
\label{th:2T vs 2tt}
If $\NP$ contains a \pgeneric language, then there exists a $2$-$\T$-complete set in $\NP$ that is $2$-$\T$-autoreducible, but not $2$-$\tt$-autoreducible.
\end{theorem}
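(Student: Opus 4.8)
The plan is to take the set $C = G \dot\cup (G \cap \SAT) \dot\cup (G \cup \SAT)$ for a \pgeneric $G \in \NP$, exactly as in the completeness separation of \cite{Lutz:CvKL,AmbBen00}, and argue three things: (i) $C$ is $2$-$\T$-complete; (ii) $C$ is $2$-$\T$-autoreducible; (iii) $C$ is not $2$-$\tt$-autoreducible. Part (i) is standard: $\SAT \leq_{2\text{-}\T} C$ by querying $G[x]$ first and then the appropriate one of $G\cap\SAT[x]$ or $G\cup\SAT[x]$; moreover this reduction is honest, so by Lemma~\ref{le:honest auto} the $2$-$\T$-autoreducibility in (ii) follows immediately once we observe the footnote's $3$-$\T$ autoreduction factors through $\SAT$ honestly. (Alternatively one gives the direct autoreduction: on an instance in one of the three tracks, pad $\SAT$ and recover $\SAT[x]$ from two queries in the other two tracks, then read off $C[x]$.) So the real content is (iii).

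For (iii), I would argue by contradiction: suppose $M$ is a $2$-$\tt$ autoreduction of $C$. The idea is that on instances $x$ from the ``$G$-track'', $M$ cannot query $G[x]$ itself, so it must determine $G[x]$ from at most two nonadaptive queries, which (after discarding $G$-track queries that are useless by honesty/padding considerations and the structure of $C$) amounts to computing $G[x]$ from a bounded amount of information about $\SAT$ and about $G$ at other strings. The combinatorial heart is then to diagonalize: build a $k$-bounded extension function (here $k$ a small constant, essentially $k=3$ to cover the three tracks) that, along $G$, watches the finitely many query patterns $M$ can produce on a $G$-track string $x$ and sets $G[x]$ to the opposite of what $M$'s truth-table would force. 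Because $M$ is nonadaptive and $2$-bounded, for each $x$ there are only constantly many possible query tuples and truth-tables, and $G$ can be defined on the relevant companion strings (the $\SAT$-coupled strings $G\cap\SAT$, $G\cup\SAT$ at the padded instances) consistently to defeat all of them. One invokes Lemma~\ref{extension function revised} to conclude that a \pgeneric $G$ meets this extension function, contradicting that $M$ correctly autoreduces $C$.

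The main obstacle is handling the adaptivity-free but still somewhat rich query behaviour of a $2$-$\tt$ reduction on the coupled structure of $C$: a query of $M$ can land in any of the three tracks, and a query into $G\cap\SAT$ or $G\cup\SAT$ at a padded string couples information about $G$ at a \emph{different} string with $\SAT$ at that string. I would manage this by using paddability of $\SAT$ to control which $\SAT$-instances appear, and by a careful case analysis on how many of $M$'s two queries fall in the $G$-track versus the $\SAT$-coupled tracks; in each case the number of unknown bits of $G$ that the answer depends on is bounded, so the extension function has enough freedom. The density of the extension function along $G$ (needed to apply Lemma~\ref{extension function revised}) comes from the fact that $G$-track strings $0^n$ of the right form occur infinitely often and $M$ is total, so $M$'s behaviour on them is always defined. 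Once the extension function is shown to be $l$-bounded with the stated complexity bounds on $pos(\alpha)$ and the bits $i_{\alpha,j}$, Lemma~\ref{extension function revised} closes the argument.
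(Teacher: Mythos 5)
Your high-level outline is faithful to the paper for parts (i) and (ii): same set $C = G \,\dot\cup\, (G\cap\SAT)\,\dot\cup\,(G\cup\SAT)$, same observation that $\SAT\leq^{\mathrm{p}}_{2\text{-}\mathrm{T}}C$ honestly, and the same appeal to Lemma~\ref{le:honest auto}. The gap is in your treatment of part (iii). You claim that because the reduction is nonadaptive and $2$-bounded, ``$G$ can be defined on the relevant companion strings consistently to defeat all of them,'' so a single bounded extension function diagonalizes against every case and p-genericity finishes the job. This is false for the cases in which one or both queries are coupled to the \emph{same} value $z$ as the input $x=0z$, i.e.\ $g_i(0z)\in\{10z,11z\}$. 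There the query answers $(G\cap\SAT)[z]$, $(G\cup\SAT)[z]$ depend on the very bit $G[z]$ that the extension function is trying to set, and the bit $\SAT[z]$ is not under the extension function's control at all. For a concrete example, take $g_1(0z)=10z$, $g_2(0z)=11z$, and $h(x,a,b)=a\leftrightarrow b$: the alleged identity becomes $G[z]=(G\cap\SAT)[z]\leftrightarrow(G\cup\SAT)[z]$, which is a tautology when $\SAT[z]=1$ and a contradiction when $\SAT[z]=0$, \emph{independently of how you set} $G[z]$. No choice of extension forces a disagreement here.

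The paper handles exactly this by a dichotomy you have not reproduced. For each of the finitely many Boolean functions $h(x,\cdot,\cdot)$ and each placement of the two query values relative to $z$, the analysis shows either (a) there is a bounded extension function that flips the outcome, so that case happens only finitely often along a p-generic $G$; or (b) consistency of the autoreduction pins down $\SAT[z]$ to a fixed constant in that case, giving a decision procedure. Combining all cases, one gets a polynomial-time algorithm that decides $\SAT[z]$ almost everywhere, and \emph{this} is what contradicts the hypothesis (p-genericity in $\NP$ implies $\P\neq\NP$). So the final contradiction is not ``$G$ meets the extension function, refuting $M$''; it is a mixture of diagonalization (for the refutable cases) and an extracted $\P$-algorithm for $\SAT$ (for the non-refutable cases). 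Without this second prong your argument cannot close, and the exhaustive analysis of the sixteen binary Boolean functions in each of the four value-placement cases is unavoidable; your proposal acknowledges ``a careful case analysis'' but does not recognize that some of those cases are not refutable by the extension function at all.
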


\begin{proof}
Let $G \in \NP$ be \pgeneric and define $C = G \;\dot\cup \;(G \cap \SAT)
\;\dot\cup \;(G \cup \SAT)$, where $\dot\cup$ stands for disjoint union
\cite{Lutz:CvKL,AmbBen00}. Disjoint union can be implemented by adding
a unique prefix to each set and taking their union. To be more clear,
let $C = 0G \;\cup\;10(G\cap \SAT)\;\cup\;11(G\cup \SAT)$. It follows
from closure properties of $\NP$ that $C \in \NP$.

To see that $C$ is $2$-$\T$-complete, consider an oracle Turing machine
$M$ that on input $x$ first queries $0x$ from $C$. If the answer is
positive, i.e. $x \in G$, $M$ queries $10x$ from $C$, and outputs the
result. Otherwise, $M$ queries $11x$ from $C$, and outputs the
answer. This Turing machine always makes two queries from $C$, runs in
polynomial time, and $M^C(x) = \SAT[x]$. This completes the proof that
$C$ is also $2$-$\T$-completeness.  Since all queries from $\SAT$ to $C$
are length-increasing, it follows from Lemma \ref{le:honest auto} that
$C$ is $2$-$\T$-autoreducible.

The more involved part of the proof is to show that $C$ is not $2$-$\tt$-autoreducible. To get a
contradiction assume that $C$ is $2$-$\tt$-autoreducible. This means there exist polynomial-time 
computable functions $h$, $g_1$, and $g_2$ such that for every $x \in \{0, 1\}^*$,
\[ C[x] = h(x , C[g_1(x)], C[g_2(x)]) \] 
and moreover $g_i(x) \neq x$ for $i = 1, 2$. Note that W.L.O.G. we can assume that $g_1(x) < g_2(x)$. \\
For $x = 0z$, $10z$, or $11z$ define the value of $x$ to be $z$, and let $x = 0z$ for some string $z$.
We have:
\[ C[x]= G[z] = h(x , C[g_1(x)], C[g_2(x)]) \]
To get a contradiction, we consider different cases depending on
whether some of  the queries have the same value as $x$ or not, and
the Boolean function $h(x,.,.)$. For some of these cases we show they
can happen only for finitely many $z$'s, and for the rest we show that
$\SAT[z]$ can be decided in polynomial time.
As a result $\SAT$ is decidable in polynomial time a.e., which contradicts the assumption that $\NP$
contains a \pgeneric language.

  \begin{itemize}
\item The first case is when values of $g_1(x)$ and $g_2(x)$ are different from $z$, and also different from each other. Assume this happens for infinitely many $z$'s. We define an extension function $f$
that is dense along $G$, so $G$ has to meet it, but $f$ is defined in a way that if $G$ meets $f$, the autoreduction will be refuted. In order to define the value that $f$ forces to $G[z]$ on the right
hand side of the reduction, we define a function $\alpha$ that assigns $0$ or $1$ to queries of our autoreduction. The idea behind defining $\alpha$ is that its value on queries $q_i$ is equal to $C[q_i]$
after we forced appropriate values into $G$, but computation of
$\alpha$ can be done in at most $2^{2n}$ steps (given access to the
partial characteristic sequence of $G$).\\
\[ \alpha(w) = \begin{cases} 
      C[w] & \textrm{ if $w < x$} \\
      0 & \textrm{ if $w > x$ and $w$ = 0y or 10y for some y} \\
      1 & \textrm{ if $w > x$ and $w$ = 11y for some y}\\  
   \end{cases} \]
Note that in the first case, since $w < x$, the value of $C[w]$ is computable in $2^{2n}$ steps. Let 
$j = h(x , \alpha (g_1(x)) , \alpha (g_2(x)))$. Later, when defining the extension function, we force 
the value of $C[x] = G[z]$ to be $1-j$, hence refuting the autoreduction.\\
The extension function $f$ is defined whenever this case happens, and it forces three values into $G$.
If $g_i(x) = 0v$ or $10v$ for some $v$, then $f(x)$ forces $G[v] = 0$. If $g_i(x) = 11v$ for some string $v$ then $f(x)$ forces $G[v] = 1$. Finally, $f(x)$ forces $G[z] = 1 - j$. Since we assumed that this case happens for infinitely many $x$'s, $f$ is dense along $G$. Therefore $G$ must meet $f$ at some string $x = 0z$. But by the very definition of $f$ this refutes the autoreduction. Hence this case can happen only for finitely many $x$'s.

\item In this case we consider the situation that $g_1(x)$ and $g_2(x)$ have the same value, say $y$, but $y \neq z$. If $y < z$ we can compute $C[g_1(x)]$ and $C[g_2(x)]$ and force 
$G[z] = 1 - h(x, C[g_1(x)], C[g_2(x)]$, which refutes the autoreduction. Therefore this cannot happen i.o.
Now based on the prefixes of $g_1(x)$ and $g_2(x)$ we consider the following cases:
\begin{enumerate}
\item If $g_1(x) = 0v$ and $g_2(x) = 10v$ we force $G[v] = 0$ and $G[x] = 1 - h(x, 0 , 0)$. This refutes the autoreduction, therefore this case can happen only finitely many times.
\item If $g_1(x) = 0v$ and $g_2(x) = 11v$ we force $G[v] = 1$ and $G[x] = 1 - h(x, 1 , 1)$. This also refutes the autoreduction, so it cannot happen i.o.
\end{enumerate}
The only possibility that remains in this case is $g_1(x) = 10v$ and $g_2(x) = 11v$. In this case the autoreduction equality can be stated as:
\[G[z] = h(x , G \cap \SAT[v] , G \cup \SAT[v]) \]
To show that this also cannot happen i.o. we need to look into different cases of the Boolean function
$h(x,.,.)$.

\begin{enumerate}

\item If $h(x , a, b) = 0$, or $1$, then force $G[z] = 1$ or $0$ respectively. Therefore this Boolean function can occur only finitely many times.

\item If $h(x, a, b) = a$, in other words $G[z] = G \cap \SAT[v]$, force $G[z] = 1$ and $G[v] = 0$. This refutes the autoreduction,  so this Boolean function cannot happen i.o.

\item If $h(x, a, b) = \neg a$, in other words $G[z] = \neg G \cap \SAT[v]$, force $G[z] = 0$ and $G[v] = 0$. This refutes the autoreduction,  so this Boolean function cannot happen i.o.

\item If $h(x, a, b) = b$, in other words $G[z] = G \cup \SAT[v]$, force $G[z] = 0$ and $G[v] = 1$. This refutes the autoreduction,  so this Boolean function cannot happen i.o.

\item If $h(x, a, b) = \neg b$, in other words $G[z] = \neg G \cup \SAT[v]$, force $G[z] = 1$ and $G[v] = 1$. This refutes the autoreduction,  so this Boolean function cannot happen i.o.

\item If $h(x, a, b) = a \wedge b$, in other words $G[z] = (G \cap \SAT[v]) \wedge (G \cup \SAT[v])$, but this is equal to $G \cap \SAT[v]$. Therefore this case is similar to the second case.

\item If $h(x, a, b) =\neg a \wedge b$, in other words $G[z] = \neg(G \cap \SAT[v]) \wedge (G \cup \SAT[v])$. Force $G[z] = 1$ and $G[v] = \SAT[v]$. This contradicts the autoreduction equality. Therefore this case can happen only finitely many times.

\item If $h(x, a, b) = a \wedge \neg b$, in other words $G[z] = (G \cap \SAT[v]) \wedge \neg(G \cup \SAT[v])$, forcing $G[z] = 1$ refutes the autoreduction.

\item If $h(x, a, b) = \neg a \wedge \neg b$, in other words $G[z] = \neg(G \cap \SAT[v]) \wedge \neg (G \cup \SAT[v])$, but this is equal to $\neg G \cup \SAT[v]$. Therefore this case is similar to the fifth case.

\item If $h(x, a, b) = a \vee b$, in other words $G[z] = (G \cap \SAT[v]) \vee (G \cup \SAT[v])$, but this is equal to $G \cup \SAT[v]$. Therefore this case is similar to the fourth case.

\item If $h(x, a, b) = \neg a \vee b$, in other words $G[z] = \neg(G \cap \SAT[v]) \vee (G \cup \SAT[v])$. In this case forcing $G[z] = 0$ refutes the autoreduction.

\item If $h(x, a, b) = a \vee \neg b$, in other words $G[z] = (G \cap \SAT[v]) \vee \neg(G \cup \SAT[v])$. In this case forcing $G[z] = 0$ and $G[v] = \SAT[v]$ refutes the autoreduction.

\item If $h(x, a, b) = \neg a \vee \neg b$, in other words $G[z] = \neg(G \cap \SAT[v]) \vee \neg(G \cup \SAT[v])$, but this is equal to $\neg(G \cap \SAT[v])$. Therefore this case is similar to the third case.

\item If $h(x, a, b) = a \leftrightarrow b$, in other words $G[z] = (G \cap \SAT[v]) \leftrightarrow (G \cup \SAT[v])$. In this case $G[z] = 0$ and $G[v] = \SAT[v]$ refutes the autoreduction.

\item If $h(x, a, b) = \neg a \leftrightarrow b$, in other words $G[z] = \neg (G \cap \SAT[v]) \leftrightarrow (G \cup \SAT[v])$. In this case $G[z] = 1$ and $G[v] = \SAT[v]$ refutes the autoreduction.

\end{enumerate}
We exhaustively went through all possible Boolean functions for the case where both queries have the same value which is different from the value of $x$, and showed that each one of them can happen only for finitely many $x$'s. As a result this case can happen only for finitely many $x$'s.

\item This is the case when one of the queries, say $g_1(x)$ has the same value as $x$, but the other query has a different value. We only consider the case where $g_1(x) = 10z$. The other case, i.e. $g_1(x) = 11z$ can be done in a similar way. Again, we need to look at different possibilities for the Boolean function $h(x,.,.)$.

\begin{enumerate}

\item $h(x, a, b) = 0$ or $1$. Forcing $G[z] = 1$ or $0$ respectively refutes the autoreduction.

\item $h(x, a, b) = a$, i.e. $G[z] = G \cap \SAT [z]$. If this happens i.o with $\SAT[z] = 0$ then we can refute the autoreduction by forcing $G[z] = 0$. Therefore in this case $\SAT[z] = 1$ a.e.

\item $h(x, a, b) = \neg a$, i.e. $G[z] = \neg (G \cap \SAT [z])$. By forcing $G[z] = 0$ we can refute the reduction. Therefore this case cannot happen i.o.

\item $h(x, a, b) = b$ or $\neg b$. Similar to previous cases.

\item $h(x, a, b) = a \wedge b$, i.e. $G[z] = (G \cap \SAT[z]) \wedge C[g_2(x)]$. In this case $\SAT[z]$ has to be $1$ a.e.

\item $h(x, a, b) =\neg a \wedge b$, i.e. $G[z] = \neg (G \cap \SAT[z]) \wedge C[g_2(x)]$. If $g_2(x) = 0y$ or $10y$ for some $y$, then forcing $G[z] = 1$ and $G[v] = 0$ refutes the reduction. If $g_2(x) = 11y$ then we have $G[z] = \neg (G \cap \SAT[z]) \wedge (G \cup \SAT[y]$. Here we force $G[z] = 0$ and $G[y] = 1$.

\item $h(x, a, b) = a \wedge \neg b$, i.e. $G[z] = (G \cap \SAT[z]) \wedge \neg C[g_2(x)]$. In this case 
$\SAT[z] = 1$ a.e.

\item $h(x, a, b) =\neg a \wedge \neg b$, i.e. $G[z] = \neg (G \cap \SAT[z]) \wedge \neg C[g_2(x)]$. If $g_2(x) = 0y$ or $11y$ for some $y$, then forcing $G[z] = 1$ and $G[v] = 1$ refutes the reduction. If $g_2(x) = 10y$ then we have $G[z] = \neg (G \cap \SAT[z]) \wedge \neg (G \cap \SAT[y])$. Here we force $G[z] = 0$ and $G[y] = 0$.

\item $h(x, a, b) = a \vee b$, i.e. $G[z] = (G \cap \SAT[z]) \vee C[g_2(x)]$. If $g_2(x) = 0y$ or $11y$ for some $y$, then forcing $G[z] = 0$ and $G[v] = 1$ refutes the reduction. If $g_2(x) = 10y$ then we have $G[z] = (G \cap \SAT[z]) \vee (G \cap \SAT[y]$. This implies that $\SAT[z]$ must be $1$ a.e.

\item $h(x, a, b) =\neg a \vee b$, i.e. $G[z] = \neg (G \cap \SAT[z]) \vee C[g_2(x)]$. In this case forcing $G[z] = 0$ refutes the reduction.

\item $h(x, a, b) = a \vee \neg b$, i.e. $G[z] = (G \cap \SAT[z]) \vee \neg C[g_2(x)]$. If $g_2(x) = 0y$ or $10y$ for some $y$, then forcing $G[z] = 0$ and $G[v] = 0$ refutes the reduction. If $g_2(x) = 11y$ then we have $G[z] = (G \cap \SAT[z]) \vee \neg(G \cup \SAT[y]$. This implies that $\SAT[z]$ must be $1$ a.e.

\item $h(x, a, b) =\neg a \vee \neg b$, i.e. $G[z] = \neg (G \cap \SAT[z]) \vee \neg C[g_2(x)]$. In this case forcing $G[z] = 0$ refutes the reduction.

\item $h(x, a, b) = a \leftrightarrow b$, i.e. $G[z] = (G \cap \SAT[z]) \leftrightarrow C[g_2(x)]$.
If $g_2(x) = 0y$ or $10y$ for some string $y$, then by forcing $G[z] = 0$ and $G[y] = 0$ we can refute the autoreduction. If $g_2(x) = 11y$, then we have $G[z] = (G \cap \SAT[z]) \leftrightarrow (G \cup \SAT[y])$. This implies that $\SAT[z] = 1$ a.e.

\item $h(x, a, b) = \neg a \leftrightarrow b$, i.e. $G[z] = \neg (G \cap \SAT[z]) \leftrightarrow C[g_2(x)]$. If $g_2(x) = 0y$ or $11y$ for some string $y$, then by forcing $G[z] = 0$ and $G[y] = 1$ we can refute the autoreduction. If $g_2(x) = 10y$, then we have $G[z] = \neg (G \cap \SAT[z]) \leftrightarrow (G \cup \SAT[y])$. This implies that $\SAT[z] = 1$ a.e.

\end{enumerate}

\item In this case we consider the situation where both queries $g_1(x)$ and $g_2(x)$ have the same value as $x$. In other words, in this case we have $g_1(x) = 10z$ and $g_2(x) = 11z$. Therefore we have:
\[G[z] = h(x, G \cap \SAT[z], G \cup \SAT[z]) \]
To investigate this case we need to look at different Boolean functions for $h(x,.,.)$.

\begin{enumerate}

\item $h(x, a, b) = 0$, $1$, $a$, $\neg a$, $b$, or $\neg b$. Each of these cases is similar to one of the cases discussed previously.

\item $h(x, a, b) = a \wedge b$, i.e. $G[z] = G \cap \SAT[z]$. This is also similar to one of the cases that we discussed previously.

\item $h(x, a, b) = \neg a \wedge b$, i.e. $G[z] = \neg (G \cap \SAT[z]) \wedge (G \cup \SAT[z])$. In this case $\SAT[z]$ must be $0$ a.e.

\item $h(x, a, b) = a \wedge \neg b$, i.e. $G[z] = (G \cap \SAT[z]) \wedge \neg(G \cup \SAT[z])$. Forcing $G[z] = 1$ refutes the reduction.

\item $h(x, a, b) = \neg a \wedge \neg b$, i.e. $G[z] = \neg (G \cap \SAT[z]) \wedge \neg(G \cup \SAT[z])$.
This is equal to $\neg(G \cup \SAT[z])$. Therefore forcing $G[z] = 0$ refutes the reduction.

\item $h(x, a, b) = a \vee b$, i.e. $G[z] = (G \cap \SAT[z]) \vee (G \cup \SAT[z])$, which is equal to 
$G \cup \SAT[z]$. Therefore $\SAT[z]$ must be $0$ a.e.

\item $h(x, a, b) = \neg a \vee b$, i.e. $G[z] = \neg (G \cap \SAT[z]) \vee (G \cup \SAT[z])$. In this case $\SAT[z]$ must be $0$ a.e.

\item $h(x, a, b) = a \vee \neg b$, i.e. $G[z] = (G \cap \SAT[z]) \vee \neg(G \cup \SAT[z])$. This implies that $\SAT[z]$ must be $1$ a.e.

\item $h(x, a, b) = \neg a \vee \neg b$, i.e. $G[z] = \neg (G \cap \SAT[z]) \vee \neg (G \cup \SAT[z])$, which is equal to $\neg (G \cap \SAT[z])$. Therefore forcing $G[z] = 0$ refutes the autoreduction.

\item $h(x, a, b) = a \leftrightarrow b$, i.e.  $G[z] = (G \cap \SAT[z]) \leftrightarrow (G \cup \SAT[z])$.
In this case $\SAT[z]$ has to be $1$ a.e.

\item $h(x, a, b) = \neg a \leftrightarrow b$, i.e.  $G[z] = \neg (G \cap \SAT[z]) \leftrightarrow (G \cup \SAT[z])$. This implies that $\SAT[z]$ has to be $0$ a.e.
\end{enumerate}
\end{itemize}

\end{proof}

\begin{corollary}\label{co:3tt vs 2tt}
If $\NP$ contains a \pgeneric language, then there exists a
$3$-$\tt$-complete set for $\NP$ that is $3$-$\tt$-autoreducible, but not $2$-$\tt$-autoreducible. 
\end{corollary}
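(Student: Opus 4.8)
The plan is to reuse the very same set from the proof of Theorem~\ref{th:2T vs 2tt}. Let $G \in \NP$ be \pgeneric and let
\[ C = 0G \;\cup\; 10(G \cap \SAT) \;\cup\; 11(G \cup \SAT), \]
so that $C \in \NP$ and, by Theorem~\ref{th:2T vs 2tt}, $C$ is not $2$-$\tt$-autoreducible. It therefore remains only to check two things: that $C$ is $3$-$\tt$-complete, and that it is $3$-$\tt$-autoreducible.

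For completeness I would record the non-adaptive identity
\[ \SAT[x] = \bigl(C[0x] \wedge C[10x]\bigr) \vee \bigl(\neg C[0x] \wedge C[11x]\bigr), \]
valid for every $x$: indeed $C[0x] = G[x]$, and if $x \in G$ then $C[10x] = \SAT[x]$ while if $x \notin G$ then $C[11x] = \SAT[x]$, so the right-hand side simplifies to $\SAT[x]$ in both cases. The map $x \mapsto (0x, 10x, 11x)$ together with this Boolean combination is a truth-table reduction from $\SAT$ to $C$ using exactly three (non-adaptive) queries, and since $|0x|, |10x|, |11x| > |x|$ it is length-increasing, hence honest. Thus $C$ is $3$-$\tt$-honest-complete for $\NP$.

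Now $3$-$\tt$-autoreducibility of $C$ follows immediately from Lemma~\ref{le:honest auto} applied with $\calR = 3$-$\tt$: every $3$-$\tt$-honest-complete set for $\NP$ is $3$-$\tt$-autoreducible. Combining the facts that $C \in \NP$, that $C$ is $3$-$\tt$-complete and $3$-$\tt$-autoreducible, and that $C$ is not $2$-$\tt$-autoreducible yields the corollary.

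There is no genuine obstacle here: the corollary is a bookkeeping consequence of Theorem~\ref{th:2T vs 2tt} and Lemma~\ref{le:honest auto}. The only point that needs care is to present the completeness reduction explicitly as a $3$-query truth-table reduction that is honest, so that Lemma~\ref{le:honest auto} upgrades it to a $3$-$\tt$-autoreduction; the substantive work — ruling out a $2$-$\tt$-autoreduction — has already been carried out in Theorem~\ref{th:2T vs 2tt}.
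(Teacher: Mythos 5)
Your proof is correct and is essentially the paper's: you reuse the set $C$ from Theorem~\ref{th:2T vs 2tt} and inherit the non-$2$-$\tt$-autoreducibility from there. The only difference is cosmetic: the paper obtains $3$-$\tt$-completeness and $3$-$\tt$-autoreducibility by the generic observation that every $2$-$\T$ reduction is a $3$-$\tt$ reduction, whereas you write out the $3$-$\tt$ reduction explicitly and invoke Lemma~\ref{le:honest auto} directly.
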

\begin{proof}
This follows immediately from Theorem \ref{th:2T vs 2tt} and the fact
that every 2-$\T$ reduction is a 3-$\tt$ reduction.
\end{proof}

\begin{corollary}\label{co:3tt vs 1T}
If $\NP$ contains a \pgeneric language, then there exists a
$3$-$\tt$-complete set for $\NP$ that is $3$-$\tt$-autoreducible, but not $1$-$\T$-autoreducible. 
\end{corollary}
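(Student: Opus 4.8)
The plan is to reuse the set $C = G \,\dot\cup\, (G\cap\SAT) \,\dot\cup\, (G\cup\SAT)$ analyzed in Theorem \ref{th:2T vs 2tt} rather than construct anything new. The positive parts are handled exactly as in Corollary \ref{co:3tt vs 2tt}: since every $2$-$\T$ reduction is a $3$-$\tt$ reduction (query $q_1$ together with both candidate second queries $q_2^{0}$ and $q_2^{1}$ non-adaptively and then apply the appropriate $3$-variable truth table), the $2$-$\T$-completeness of $C$ gives $3$-$\tt$-completeness; and because the reduction of $\SAT$ to $C$ exhibited in Theorem \ref{th:2T vs 2tt} is length-increasing, $C$ is in fact $3$-$\tt$-honest-complete, so Lemma \ref{le:honest auto} gives $3$-$\tt$-autoreducibility.

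For the negative part I would argue by contradiction. Suppose $C$ is $1$-$\T$-autoreducible. A Turing reduction that makes a single query has no adaptivity, so a $1$-$\T$ reduction is just a $1$-$\tt$ reduction: on input $x$ it computes one query $q \ne x$ and outputs a Boolean function of $x$ and $C[q]$. Such a reduction is trivially a $2$-$\tt$ reduction --- make the query $q$ twice (or add a second, ignored query $q$), noting that both queries still avoid $x$, so it remains an autoreduction. Hence $C$ would be $2$-$\tt$-autoreducible, contradicting Theorem \ref{th:2T vs 2tt}.

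There is essentially no obstacle here: the real work was already done in Theorem \ref{th:2T vs 2tt}, and the corollary only repackages it with the trivial inclusions $2\text{-}\T \subseteq 3\text{-}\tt$ and $1\text{-}\T = 1\text{-}\tt \subseteq 2\text{-}\tt$. The one point worth stating with care is that collapsing a $1$-query reduction to a $2$-query one preserves the ``instance is never queried'' condition, so that the contradiction with Theorem \ref{th:2T vs 2tt} is genuine.
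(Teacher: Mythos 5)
Your argument is correct and matches what the paper intends (the corollary is stated without an explicit proof, but it follows from Theorem \ref{th:2T vs 2tt} exactly as you describe). The positive part is the same $2\text{-}\T \Rightarrow 3\text{-}\tt$ conversion used in Corollary \ref{co:3tt vs 2tt}, and the negative part uses the trivial chain $1\text{-}\T = 1\text{-}\tt \subseteq 2\text{-}\tt$ together with the non-$2$-$\tt$-autoreducibility established in Theorem \ref{th:2T vs 2tt}; your remark that replicating the single query preserves the ``never queries $x$'' condition is the right detail to check.
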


Our next theorem separates $(k+1)$-tt-autoreducibility from
$k$-tt-autoreducibility and $k$-T-autoreducibility from
$k$-tt-autoreducibility under the Genericity Hypothesis. The proof
uses the construction of Ambos-Spies and Bentzien \cite{AmbBen00} that
separates the corresponding completeness notions.

\begin{theorem}\label{th:k-tt}
If $\NP$ contains a \pgeneric language, then for every $k \geq 3$
there exists a set that is
\begin{itemize}
\item $(k+1)$-$\tt$-complete for $\NP$ and $(k+1)$-$\tt$-autoreducible,
\item $k$-$\T$-complete for $\NP$ and $k$-$\T$-autoreducible, and
\item not $k$-$\tt$-autoreducible.
\end{itemize}
\end{theorem}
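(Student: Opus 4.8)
The plan is to use, for each $k \geq 3$, the set $C \in \NP$ that Ambos-Spies and Bentzien \cite{AmbBen00} construct from a \pgeneric $G \in \NP$ to separate $k$-$\T$-completeness and $(k+1)$-$\tt$-completeness from $k$-$\tt$-completeness and $(k-1)$-$\T$-completeness, and to show that the \emph{same} set separates the matching autoreducibility notions. For concreteness, $C$ is a disjoint union --- implemented by prefix tagging --- of $k+1$ parts: the $k-1$ ``selector'' columns $G^{(1)},\dots,G^{(k-1)}$ of $G$, together with two ``bottom'' sets $R_0$ and $R_1$ that are Boolean combinations of the $G^{(i)}$ and $\SAT$, for instance $R_1 = G^{(1)}\cap\cdots\cap G^{(k-1)}\cap\SAT$ and $R_0 = \SAT\,\cup\,(G^{(1)}\cap\cdots\cap G^{(k-1)})$. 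These are chosen so that for every $z$, the selector bits $G^{(1)}[z],\dots,G^{(k-1)}[z]$ determine which of $R_0[z],R_1[z]$ equals $\SAT[z]$ --- it is $R_1[z]$ when all selector bits are $1$ and $R_0[z]$ otherwise. Closure properties of $\NP$ give $C \in \NP$, and for $k=2$ this is the set of Theorem \ref{th:2T vs 2tt}.

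The first two bullets follow quickly. Querying the $k-1$ selector bits of $z$ and then, adaptively, the appropriate $R_b[z]$ gives $\SAT \leqp_{k\text{-}\T} C$; querying the $k-1$ selector bits together with \emph{both} $R_0[z]$ and $R_1[z]$ gives $\SAT \leqp_{(k+1)\text{-}\tt} C$. Every query has the form $(\text{tag})\,z$ with $|z|=|x|$, so both reductions are length-increasing and $C$ is $(k+1)$-$\tt$-honest-complete and $k$-$\T$-honest-complete for $\NP$; by Lemma \ref{le:honest auto}, $C$ is then $(k+1)$-$\tt$-autoreducible and $k$-$\T$-autoreducible. That $C$ is neither $k$-$\tt$-complete nor $(k-1)$-$\T$-complete when $G$ is \pgeneric is the theorem of \cite{AmbBen00}, which we take as given.

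The real work is the third bullet: $C$ is not $k$-$\tt$-autoreducible. Assume otherwise, via polynomial-time $h,g_1,\dots,g_k$ with $C[x] = h(x,C[g_1(x)],\dots,C[g_k(x)])$, $g_j(x)\neq x$, and $g_1(x)<\cdots<g_k(x)$. Following Theorem \ref{th:2T vs 2tt}, I would classify each $x$ by which of the $k+1$ parts it lies in (write $x = (\text{tag})\,z$ and call $z$ its \emph{value}), by which of the queries $g_j(x)$ have value $z$ and how the remaining query values coincide with one another and with the parts they sit in, and by the Boolean function $h(x,\cdot,\dots,\cdot)$. For each configuration one argues that it holds for only finitely many $x$, or else --- when $x$ lies in an $R_b$-part --- that $\SAT[z]$ is polynomial-time computable for all but finitely many such $x$. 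Two mechanisms, both already present in Theorem \ref{th:2T vs 2tt}, do this. When no $g_j(x)$ has value $z$, the predictor $\alpha$ (equal to $C$ on queries below $x$ and a fixed guess, determined by the part, on queries above $x$) combined with forcing the relevant selector bits of the large queries makes the output of $h$ predictable, so forcing $C[x]$ to the opposite value refutes the reduction. When some $g_j(x)$ has value $z$, each such $C[g_j(x)]$ is a known Boolean combination of selector bits $G^{(i)}[z]$ and, if its part is a bottom part, of $\SAT[z]$; analyzing $h$ as in the proof of Theorem \ref{th:2T vs 2tt}, one either forces a handful of selector bits (possibly forcing one to equal $\SAT[z]$, which is exponential-time computable and hence within the budget of Lemma \ref{extension function revised}) so that $C[x]$ disagrees with $h$, or one is left with an identity forcing $\SAT[z]$ to be constant on the remaining inputs --- this last possibility is exactly the structural obstruction that already prevents $k$-$\tt$-completeness. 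Each refutation is carried out by a $(k+1)$-bounded extension function that is dense along $G$ under the assumption that the configuration is infinite, so by Lemma \ref{extension function revised} $G$ meets it, a contradiction; hence every configuration is finitary except those giving $\SAT[z]$ constant almost everywhere. As finitely many configurations cover all inputs, we conclude $\SAT \in \P$, contradicting the existence of a \pgeneric set in $\NP$.

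I expect the main obstacle to be the size and bookkeeping of this case analysis rather than any single difficulty: there are $2^{2^k}$ Boolean functions on $k$ arguments, times the ways the $k$ queries are distributed among the $k+1$ parts, times the coincidence patterns of their values. The useful observation is that the outcome of a configuration depends only on a few coarse features --- whether a query's value is below $x$, whether it equals $z$, whether its part is a selector column or a bottom part, whether it collides with $x$'s part --- together with the sign pattern of $h$, so that the $k=2$ analysis of Theorem \ref{th:2T vs 2tt} lifts essentially verbatim; verifying the resulting finite list of patterns is routine but long.
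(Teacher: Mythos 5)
Your construction and the first two bullets are essentially the paper's: the union/intersection variant of the selector and bottom parts is cosmetic (both make $\SAT[z]$ recoverable from whichever bottom slot the selector pattern indicates), and deriving $(k+1)$-$\tt$ and $k$-$\T$ honest-completeness and then invoking Lemma~\ref{le:honest auto} is exactly the paper's argument.

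The gap is in the third bullet. You plan to lift the exhaustive $h$-case analysis of Theorem~\ref{th:2T vs 2tt} to all $2^{2^k}$ $k$-ary truth tables and all coincidence patterns, conceding only that the bookkeeping is large and claiming it is ``routine but long'' and ``lifts essentially verbatim.'' That is not a proof, and it is not what the paper does. The point of Theorem~\ref{th:k-tt} is that for $k\geq 3$ the $h$-case analysis can be \emph{avoided entirely}. The paper diagonalizes only at $x = 0^{n+k} = 0^n z_1$, a selector-column position, so $A[x] = G[x]$ is directly forceable; it then builds a single extension function $f$ and an auxiliary predictor $\alpha$ such that, once $f$ has forced the right selector bits, $\alpha(w) = A[w]$ on \emph{every} query $w$ no matter what $h$ is, and the last pair of $f$ just sets $G[0^{n+k}] = 1 - h(0^{n+k},\alpha(g_1(\cdot)),\dots,\alpha(g_k(\cdot)))$. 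There is no enumeration of Boolean functions. Two facts make this work and both are absent from your sketch as organizing ideas. First, because $k\geq 3$, the position $0^n z_2$ is a \emph{second} free selector slot distinct from the diagonal $0^n z_1$; forcing $G[0^n z_2] = 1$ collapses $A[0^n z_k]$ to $\SAT[0^n]$ and $A[0^n z_{k+1}]$ to $1$, both computable within the $2^{O(n)}$ budget of Lemma~\ref{extension function revised}. (For $k=2$ there is no such free slot, which is exactly why Theorem~\ref{th:2T vs 2tt} is stuck with the truth-table enumeration; the paper notes this explicitly as the obstruction to a uniform $k\geq 2$ proof.) Second, for any query value $v\neq 0^n$, the autoreduction makes only $k$ queries while $v$ owns $k+1$ slots $vz_1,\dots,vz_{k+1}$, so if both bottom slots $vz_k, vz_{k+1}$ are queried then some selector slot $vz_r$ with $r\leq k-1$ is not; forcing $G[vz_r] = 1 - \SAT[v]$ pins $A[vz_k] = 0$ and $A[vz_{k+1}] = 1$ independently of $\SAT[v]$, with $vz_r$ itself never entering $\alpha$'s view. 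You gesture at this (``possibly forcing one to equal $\SAT[z]$''), but the choice of diagonal position and the $k$-versus-$(k+1)$ counting argument are the load-bearing ideas; without them the argument is not a finite, checkable proof for general $k$.
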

\begin{proof}
Let $G \in \NP$ be a \pgeneric language, and $z_1,...,z_{(k+1)}$ be the first $k+1$ strings of length $k$.
For $m = 1,...,k-1$ define
\begin{align}
  &  \hat{G_m}  = \{x \;|\; xz_m \in G \} \\
  & \hat{G}  = \bigcup\limits_{m=1}^{k-1} \hat{G_m} \\
  &  A  = \bigcup\limits_{m=1}^{k-1}\{xz_m \;|\; x \in \hat{G_m} \} 
          \bigcup \{xz_k \;|\; x \in \hat{G} \cap \SAT \}
          \bigcup \{xz_{k+1}\;|\; x \in \hat{G} \cup \SAT\}
\end{align}
Here are some properties of the sets defined above:
\begin{itemize}

\item For every $x$, $x \in \hat{G} \Leftrightarrow \exists 1 \le i \le k-1.\; xz_i \in G$. 
\item $A$ contains strings in $G$ that end with $z_1,...$, or $z_{(k-1)}$, i.e. $A(xz_i) = G(xz_i)$ for every $x$ and $1 \le i \le k-1$.
\item $xz_k \in A $ if and only if $x \in \SAT \wedge (\exists 1 \le i \le k-1. xz_i \in G)$.
\item $xz_{(k+1)} \in A $ if and only if $x \in \SAT \vee (\exists 1 \le i \le k-1. xz_i \in G)$.
\item $xz_j \notin A$ for $j > k+1$.
\end{itemize}
It is easy to show that $\SAT \le_{(k+1)-\tt}^\p A$. On input $x$, make
queries $xz_1,...,xz_{(k+1)}$ from $A$.  If at least one of the
answers to the first $k-1$ queries is positive, then $\SAT[x]$ is
equal to the $k$th query, i.e. $\SAT[x] = A[xz_k]$. Otherwise
$\SAT[x]$ is equal to $A[xz_{(k+1)}]$. As a result, $A$ is
$(k+1)$-$\tt$-complete for $\NP$. If the queries are allowed to be
dependent, we can choose between $xz_k$ and $xz_{(k+1)}$ based on the
answers to the first $(k-1)$ queries. Therefore $A$ is also
$k$-$\T$-complete for $\NP$. Since all these queries are honest, in
fact length-increasing, it follows from Lemma \ref{le:honest auto} that $A$
is both $(k+1)$-$\tt$-autoreducible and $k$-$\T$-autoreducible.

To get a contradiction, assume $A$ is $k$-$\tt$-autoreducible via $h,
g_1, ...,g_k$. In other words, assume that for every $x$:
\begin{equation}
A[x] = h(x, A[g_1(x)],...,A[g_k(x)])
\end{equation}
and $\forall 1 \le i \le k.\; g_i(x) \neq x$.
In particular, we are interested in the case where $x = 0^nz_1 = 0^{n+k}$, and we have:
\begin{equation}\label{auto-eq}
A(0^{n+k}) = h(0^{n+k}, A[g_1(0^{n+k})],...,A[g_k(0^{n+k})])
\end{equation}
and all $g_i(0^{n+k})$'s are different from $0^{n+k}$ itself.

In the following we will define a bounded extension function $f$ that satisfies the condition in Lemma 
\ref{extension function revised} such that if $G$ meets $f$ at $0^{n+k}$ then \eqref{auto-eq} will fail.
We use the \pgenericity of $G$ to show that $G$ has to meet $f$ at $0^{n+k}$ for some $n$ which completes the proof. In other words, we define a bounded extension function $f$ such that 
given $n$ and $X \upharpoonright 0^n$, $f(X \upharpoonright 0^n) = (y_0,i_0)...(y_m,i_m)$ and if
\begin{equation}\label{meeting conditions}
\begin{aligned}
& G \upharpoonright 0^n = X \upharpoonright 0^n \;\;\textrm{and}\\
&\forall 0 \le j \le m.\;G(y_j)=i_j
\end{aligned}
\end{equation}
then
\begin{equation}\label{refuting auto-red}
A(0^{n+k}) \neq h(0^{n+k}, A[g_1(0^{n+k})],...,A[g_k(0^{n+k})])
\end{equation}
Moreover, $m$ is bounded by some constant that does not depend on $n$ and $X \upharpoonright 0^n$.
Note that we want $f$ to satisfy the conditions in Lemma \ref{extension function revised}, so 
$y_j$'s and $i_j$'s must be computable in $O(2^n)$ and $O(2^{|y_j|})$ steps respectively. After defining such $f$, by Lemma \ref{extension function revised} $G$ must meet $f$ at $0^{n+k}$ for some $n$. This means
\eqref{meeting conditions} must hold. As a result, \eqref{refuting auto-red} must happen for some $n$, 
which is a contradiction.\\
$f$ can force values of $G[y_i]$'s for a constant number of $y_i$'s. Because of the dependency between $G$ and $A$ we can force values for $A[w]$, where $w$ is a query, by using $f$ to force values in $G$. This is done based on the strings that have been queried, and their indices as follows. 
\begin{itemize}
\item If $w = vz_i$ for some $1 \le i \le k-1$ then $A[w] = G[w]$. Therefore we can force $A[w]$ to $0$ or $1$ by forcing the same value for $G[w]$.
\item If $w = vz_k$ then $A[w] = \SAT[v] \wedge (\bigvee_{l=1}^{k-1}{G[vz_l]})$, so by forcing all
$G[vz_l]$'s to $0$ we can make $A[w] = 0$.
\item If $w = vz_{k+1}$ then $A[w] = \SAT[v] \vee (\bigvee_{l=1}^{k-1}{G[vz_l]})$. In this case by forcing one of the $G[vz_l]$'s to $1$ we can make $A[w] = 1$.
\end{itemize}
We will use these facts to force the value of $A$ on queries on input $0^{n+k}$ on the left hand side of
\eqref{auto-eq}, and then force a value for $A[0^{n+k}]$ such that \eqref{auto-eq} fails. The first problem that we encounter is the case where we have both $vz_k$ and $vz_{k+1}$ among our queries. If this happens
for some $v$ then the strategy described above does not work. To force $A[vz_k]$ and $A[vz_{k+1}]$ to 
$0$ and $1$ respectively, we need to compute $\SAT[v]$. If $\SAT[v] = 0$ then $A[vz_k] = 0$, 
and $A[vz_{k+1}]$ can be forced to $1$ by forcing $G[vz_l] = 1$ for some $1 \le l \le k-1$. 
On the other hand, if $\SAT[v] = 1$ then $A[vz_{k+1}] = 1$, and forcing all $G[vz_l]$'s to $0$ makes 
$A[vz_k] = 0$. This process depends on the value of $\SAT[v]$, and $v$ can be much longer that $0^{n+k}$. 
Because of the time bounds in Lemma \ref{extension function revised} the value forced for $A[0^{n+k}]$ cannot depend on $\SAT[v]$. But note that we have $k$ queries, and two of them are $vz_k$ and $vz_{k+1}$. Therefore at least one of the strings $vz_1,...,vz_{k-1}$ is not among the queries. We use this 
string as $vz_l$, and make $G[vz_l] = 1$ when $\SAT[v] = 0$.

Now we define an auxiliary function $\alpha$ from the set of queries, called $\QUERY$, to $0$ or $1$. The idea is that 
$\alpha$ computes the value of $A$ on queries without computing $G[v]$, given that $G$ meets the extension function. $\alpha$ is defined in two parts based on the length of the queries.
For queries $w = vz_p$ that are shorter than $0^{n+k}$, i.e. $|w| < n+k$, we define:
\[ \alpha(w) = \begin{cases} 
      X[w] & \textrm{if $1 \le p \le k-1$} \\
      1 & \textrm{if $p = k \;\wedge \; v \in \SAT \;\wedge \; \exists 1 \le l \le k-1. \; vz_l \in X$} \\
      1 & \textrm{if $p= k+1 \;\wedge \;( v \in \SAT \;\vee \; \exists 1 \le l \le k-1. \; vz_l \in X)$} \\  
      0 & \textrm{otherwise}
   \end{cases} \]
This means that if $X \upharpoonright 0^{n+k} = G \upharpoonright 0^{n+k}$ then $\alpha (w) = A(w)$ for
every query $w = vz_p$ with $|w| < n+k$.\\
On the other hand, for queries $w = vz_p$ that $|w| \geq n+k$, $\alpha$ is defined as:
   \[ \alpha(w) = \begin{cases} 
      1 & \textrm{ if $v = 0^n \;\wedge  p = 2$} \\
      \SAT[v] & \textrm{ if $v = 0^n \;\wedge \; p = k$} \\
      1 &  \textrm{if $v = 0^n \;\wedge \; p = k+1$} \\  
      1 & \textrm{if $ v \neq 0^n \wedge p = k+1$}\\
      1 & \textrm{if $v \neq 0^n \wedge p = k-1 \wedge 
      \forall l \in \{1,...,k-1,k+1\}.\; vz_l \in \QUERY$}\\
      0 & \textrm{otherwise}\\
   \end{cases} \]
For this part of $\alpha$, our definition of the extension function, which is provided below, guarantees that $\alpha(w) = A[w]$ if \eqref{meeting conditions} holds. Note that the first case in the definition above implies that $k$ must be greater than or equal to $3$, and that is the reason this proof does not work for separating 
$3$-$\tt$-autoreducibility from $2$-$\tt$-autoreducibility.

Now we are ready to define the extension function $f$. For any string $v$ which is the value for 
some query, i.e. $\exists 1 \le p \le k+1. vz_p \in \QUERY$, we define pairs of strings and $0$ or $1$'s. These pairs will be part of our extension function. Fix some value $v$, and let $r$ be the smallest index 
that $vz_r \notin \QUERY$, or $k-1$ if such index does not exist, i.e.
\begin{equation}
r = min \{s \geq 1 | vz_s \notin \QUERY \vee s = k-1 \}
\end{equation}
We will have one of the following cases:
\begin{enumerate}
\item If $v = 0^n$ then pairs $(vz_2,1),(vz_3,0),...,(vz_{k-1},0)$ must be added to $f$.
\item If $v \neq 0^n$ and $vz_{k+1} \notin \QUERY$ then add pairs 
$(vz_1,0)$,...,$(vz_{k-1},0)$ to $f$.
\item If $v \neq 0^n$, $vz_{k+1} \in \QUERY$ and $vz_k \notin \QUERY$ add pairs $(vz_i,j)$ for 
$1 \le i \le k-1$ where $j = 0$ for all $i$'s except $i = r$ where $j = 1$.
\item If $v \neq 0^n$, $vz_{k+1} \in \QUERY$ and $vz_k \in \QUERY$ add pairs 
$(vz_i,j)$ for $1 \le i \le k-1$ where $j = 0$ for all $i$'s except $i = r$ where $j = 1 - \SAT[v]$.
\end{enumerate}
This process must be repeated for every $v$ that is the value of some query. Finally, we add 
 $(0^{n+k}, 1 - h(0^{n+k}, \alpha(g_1(0^{n+k})),...,\alpha(g_k(0^{n+k})))$ to $f$ in order to
 refute the autoreduction.
It is worth mentioning that in the fourth case above, since both $vz_k$ and $vz_{k+1}$ are among queries,
at least one of the strings $vz_1$,...,$vz_{k-1}$is not queried. Therefore by definition of $r$, 
$vz_r \notin \QUERY$. This is important, as we describe in more detail later, because we forced
$G[vz_r] = 1 - \SAT[v]$, and if $vz_r \in \QUERY$ then $\alpha(vz_r) = G[vz_r] = 1 - \SAT[v]$. 
But $\alpha$ must be compuatable in $O(2^n)$ steps, which is not possible if $v$ is much longer
than $0^{n+k}$.

Now that the extension function is defined completely, we need to show that it has the desired properties.
First, we will show that if $G$ meets $f$ at $0^{n+k}$, i.e. \eqref{meeting conditions} holds, then 
$\alpha$ and $A$ agree on every query $w$ with $|w| \geq n+k$, i.e. $\alpha(w) = A[w]$.\\
Let $w = vz_p$, and $|w| \geq n+k$.
\begin{itemize}
\item If $v = 0^n$ and $p = 2$ then $\alpha(w) = 1$ and $A[w] = G[w] = 1$.
\item If $v = 0^n$ and $p = k$ then $\alpha(w) = \SAT[v]$ and 
$A[w] = \SAT[v] \wedge (\bigvee_{l = 1}^{k-1}{G[vz_l]})$. Since $G[vz_2] = 1$ is forced, $A[w] = \SAT[v]$.
\item If $v = 0^n$ and $p = k+1$ then $\alpha(w) = 1$ and $A[w] = \SAT[v] \vee (\bigvee_{l = 1}^{k-1}{G[vz_l]}) = 1$ since $G[vz_2] = 1$.
\item If $v = 0^n$ and $p \neq 2,k,k+1$ then $\alpha(w) = A[w] = 0$.
\item If $v \neq 0^n$ and $ p < k-1$ then $\alpha(w) = 0$. Since $p < k-1$, and $vz_p \in \QUERY$, 
by definition of $r$, $r \neq p$. Therefore $G[vz_p]$ is forced to $0$ by $f$. As a result,
 $A[w] = A[vz_p] = G[vz_p] = 0 = \alpha (w)$.
\item If $v \neq 0^n$, $p = k-1$, and $vz_1$,...,$vz_{k-1}$,$vz_{k+1} \in \QUERY$ then
$\alpha (w) = 1$. In this case $r = k-1$, so it follows from definition of $f$ that $G[vz_{k-1}] = 1$.
As a result, $A[w] = A[vz_{k-1}] = G[vz_{k-1}] = 1 = \alpha(w)$.
\item If $v \neq 0^n$, $p = k-1$, and at least one of the strings $vz_1$,...,$vz_{k-1}$,$vz_{k+1}$ 
is not queried then we consider two cases. If $vz_{k+1} \notin \QUERY$ then $f$ forces $G[vz_{k-1}]$ to
$0$. On the other hand, if $vz_{k=1} \in \QUERY$, then at least one of $vz_1$,...,$vz_{k-1}$ 
is not a query. Therefore by definition of $r$, $r \neq k-1$. This implies that $G[vz_{k-1}] = 0$ by $f$.
\item If $v \neq 0^n$, $p = k$ then $\alpha (w) = 0$. Consider two cases. If $vz_{k+1} \notin \QUERY$
then $G[vz_i] = 0$ for every $1 \le i \le k-1$. 
Therefore $A[w] = \SAT[v] \wedge (\bigvee_{l = 1}^{k-1}{G[vz_l]}) = 0$.
Otherwise, when $vz_{k+1} \in \QUERY$, since we know that $vz_k$ also belongs to $\QUERY$, $f$ forces $G[vz_r] = 1 - \SAT[v]$, and $G[vz_i] = 0$ for every other $1 \le i \le k-1$. 
Therefore $A[w] = \SAT[v] \wedge (\bigvee_{l = 1}^{k-1}{G[vz_l]}) = \SAT[v] \wedge (1-\SAT[v]) = 0$.
\item If $v \neq 0^n$, $p = k + 1$ then $\alpha (w) = 1$. If $vz_k \notin \QUERY$
then $G[vz_r] = 1$ by $f$. Therefore $A[w] = \SAT[v] \vee (\bigvee_{l = 1}^{k-1}{G[vz_l]}) = 1$.
On the other hand, if $vz_k \in \QUERY$ then $f$ forces $G[vz_r] = 1 - \SAT[v]$. As a result, 
$A[w] = \SAT[v] \vee (\bigvee_{l = 1}^{k-1}{G[vz_l]}) = 1$.
\end{itemize}
This shows that in any case, $\alpha (w) = A[w]$ for $w \in \QUERY$, given that \eqref{meeting conditions}
holds, i.e $G$ meets $f$. By combining this with \eqref{auto-eq} we have
\begin{equation*}
\begin{aligned}
A(0^{n+k}) = & h(0^{n+k}),A(g_1(0^{n+k})),...,A(g_k(0^{n+k}))) \\
=& h(0^{n+k},\alpha (g_1(0^{n+k})),...,\alpha(g_k(0^{n+k})))
\end{aligned}
\end{equation*}
On the other hand, we forced 
$A[0^{n+k}] = 1 - h(0^{n+k},\alpha (g_1(0^{n+k})),...,\alpha(g_k(0^{n+k})))$
which gives us the desired contradiction.

The last part of our proof is to show that $f$ satisfies the
conditions in Lemma \ref{extension function revised}.  For every value $v$
which is the value of some query we added $k-1$ pairs to $f$, and
there are $k$ queries, which means at most $k$ different
values. Therefore, the number of pairs in $f$ is bounded by $k^2$,
i.e. $f$ is a bounded extension function.

If $f(X \upharpoonright
0^{n+k}) = (y_0,j_0),...,(y_m,j_m)$ then $y_i$'s are computable in
polynomial ime in $n$, and $j_i$'s are computable in $O(2^{|y_i|})$
because the most time consuming situation is when we need to compute
$\SAT[v]$ which is doable in $O(2^n)$. For the condition forced to the
left hand side of \eqref{auto-eq}, i.e $G[0^{n+k}] = 1 -
h(0^{n+k},\alpha (g_1(0^{n+k})),...,\alpha(g_k(0^{n+k})))$, note that
$\alpha (w)$ can be computed in at most $O(2^n)$ steps for $w \in
\QUERY$, and $h$ is computable in polynomial time.
\end{proof}

Next we separate $(k+1)$-$\tt$-autoreducibility and
$k$-$\T$-autoreducibility from $(k-1)$-$\T$-autoreducibility. The
proof uses the same construction from the previous theorem, which
Ambos-Spies and Bentzien \cite{AmbBen00} showed separates these
completeness notions.
\begin{theorem}\label{th:k-1-T}
If $\NP$ contains a \pgeneric language, then for every $k \geq 3$
there exists a set that is
\begin{itemize}
\item $(k+1)$-$\tt$-complete for $\NP$ and $(k+1)$-$\tt$-autoreducible,
\item $k$-$\T$-complete for $\NP$ and $k$-$\T$-autoreducible, and
\item not $(k-1)$-$\T$-autoreducible.
\end{itemize}
\end{theorem}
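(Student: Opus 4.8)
The plan is to reuse the set $A$ constructed in the proof of Theorem~\ref{th:k-tt} and show that the same diagonalization against a $p$-generic $G$ refutes $(k-1)$-$\T$-autoreducibility. So fix $G \in \NP$ $p$-generic, let $z_1,\dots,z_{k+1}$ be the first $k+1$ strings of length $k$, and define $\hat{G_m}$, $\hat G$, and $A$ exactly as before; the completeness and positive autoreducibility claims carry over verbatim from Theorem~\ref{th:k-tt}, so only the negative part is new. Assume for contradiction that $A$ is $(k-1)$-$\T$-autoreducible via a polynomial-time oracle machine $M$ that on input $x$ makes at most $k-1$ adaptive queries, none equal to $x$. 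The instances I would diagonalize at are again $x = 0^{n+k} = 0^n z_1$, and the goal is a bounded extension function $f$, satisfying the hypotheses of Lemma~\ref{extension function revised}, which forces a finite number of values of $G$ on strings of the form $v z_i$ so that the computation $M^A(0^{n+k})$ is determined and then forces $A(0^{n+k}) = G(0^n z_1)$ to be the opposite of $M$'s output.

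The first key step is to re-examine the query-simulation mechanism in the adaptive setting. As in Theorem~\ref{th:k-tt}, for a query $w = v z_p$ with $p \le k-1$ we have $A[w] = G[w]$ and can force it directly; for $p = k$ we force $A[w] = 0$ by setting all $G[v z_l] = 0$ ($1 \le l \le k-1$); for $p = k+1$ we force $A[w] = 1$ by setting some $G[v z_l] = 1$. Because $M$ is adaptive I would run the simulation query-by-query: maintain the partial assignment to $G$ built so far, compute the next query $w$ from $M$'s state, decide its $A$-value using the rules above while extending the forced assignment, and feed that bit back to $M$. The crucial counting fact is that $M$ makes at most $k-1$ queries on input $0^{n+k}$, so at most $k-1$ distinct values $v$ arise; for each such $v$ at most $k-1$ strings $v z_i$ get forced, so $f$ has at most $(k-1)^2 + 1$ pairs and is bounded. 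The $\SAT[v]$-time issue is handled exactly as before: if for some $v$ both $v z_k$ and $v z_{k+1}$ are queried, then since only $k-1 \le k-1$ queries are made and two of them sit at indices $k,k+1$, at most $k-3$ of the indices $1,\dots,k-1$ are used, so at least one index $r$ with $v z_r \notin \QUERY$ is available, and we force $G[v z_r] = 1-\SAT[v]$ on that unqueried string so that $\alpha$ need not compute $\SAT[v]$; the forced $A$-values are then $A[vz_k]=\SAT[v]\wedge(1-\SAT[v])=0$ and $A[vz_{k+1}]=\SAT[v]\vee\cdots=1$, matching $\alpha$.

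The second step is the diagonal value itself. Define $\alpha$ on $\QUERY$ just as in Theorem~\ref{th:k-tt}: on queries shorter than $0^{n+k}$ use $X \upharpoonright 0^{n+k}$ to reconstruct $A$; on queries of length $\ge n+k$ use the case analysis tied to the forced values (in particular $\alpha(0^n z_2) = 1$, $\alpha(0^n z_k) = \SAT[0^n]$, $\alpha(0^n z_{k+1}) = 1$, and $\alpha = 0$ otherwise for $v = 0^n$, with the $v \ne 0^n$ cases exactly as before). Run $M$ with oracle answers given by $\alpha$; this is a fixed deterministic polynomial-time-in-$|0^{n+k}|$ computation, hence takes $2^{O(n)}$ steps, and it outputs a bit $b = M^\alpha(0^{n+k})$. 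Add the pair $(0^{n+k}, 1-b)$ to $f$ along with all the forcing pairs above. Verify, case by case on the queries (using that once \eqref{meeting conditions} holds we have $\alpha(w) = A[w]$ for every $w \in \QUERY$, including all strings $M$ actually queries), that $M^A(0^{n+k}) = M^\alpha(0^{n+k}) = b \ne 1-b = A(0^{n+k})$, contradicting the autoreduction. Density of $f$ along $G$ is immediate since $f$ is defined for every $n \ge 1$, so by Lemma~\ref{extension function revised} $G$ meets $f$ at some $0^{n+k}$, giving the contradiction.

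The main obstacle I expect is the adaptivity: unlike the truth-table case where all $k$ query strings $g_1(x),\dots,g_k(x)$ are available up front, here the later queries depend on the oracle answers to earlier ones, so I must interleave "decide the query's $A$-value / extend the forced assignment" with "continue the simulation", and I must make sure that the forcing decisions made early (especially the $G[vz_r] = 1-\SAT[v]$ choice) never conflict with a later query on that same string $v z_r$ — this is exactly why the bound $k-1$ on the number of queries is essential and why the argument separates from $(k-1)$-$\T$ but, just as in Theorem~\ref{th:k-tt}, needs $k \ge 3$ (the $v = 0^n$ case forces $G[0^n z_2]$, which requires $z_2$ to exist among the first $k-1$ indices used differently from $z_1$). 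A secondary point to get right is confirming that the total running time of computing $f(X \upharpoonright 0^{n+k})$ stays within the $2^{cn}$ / $2^{c|y_j|}$ budget of Lemma~\ref{extension function revised}, which holds because the only expensive subroutine is evaluating $\SAT[v]$ for the (at most $k-1$) queried values $v$, and each such $v$ has $|v| \le \mathrm{poly}(n)$ since it is produced by the polynomial-time machine $M$ on input $0^{n+k}$, so $\SAT[v]$ costs $2^{|v|^{O(1)}} = 2^{O(|y_j|)}$ where $y_j = v z_r$.
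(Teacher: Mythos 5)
Your proposal is correct and follows the paper's proof of Theorem~\ref{th:k-1-T} almost exactly: reuse $A$ from Theorem~\ref{th:k-tt}, define $\alpha$ on all strings (since adaptive queries cannot be enumerated up front), run $M^{\alpha}(0^{n+k})$ to fix $\QUERY$, and build $f$ from $\QUERY$ together with the diagonal pair $(0^{n+k}, 1 - M^{\alpha}(0^{n+k}))$, using the bound of $k-1$ queries to guarantee an unqueried index $r$ for the $\SAT[v]$-avoiding trick. Two small caveats: the ``interleave forcing with simulation'' picture in your second paragraph cannot be taken literally (you cannot commit forcing values on $vz_1,\dots,vz_{k-1}$ at the moment $vz_k$ is queried without already knowing whether $vz_{k+1}$ will be queried later), but your later two-phase description --- simulate against the fixed $\alpha$, then define $f$ from the recorded $\QUERY$ --- is the correct one and is exactly what the paper does; also, the $\QUERY$-dependent clause you carry over verbatim from Theorem~\ref{th:k-tt}'s $\alpha$ (the $p=k-1$, all of $vz_1,\dots,vz_{k-1},vz_{k+1}$ queried case) is vacuous with only $k-1$ queries, which is why the paper simply drops it from the definition of $\alpha$ in this proof.
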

\begin{proof}
We use the same sets $G$ and $A$ as defined in the proof of Theorem
\ref{th:k-tt}. We proved that $A$ is $(k+1)$-$\tt$-complete,
$k$-$\T$-complete, $(k+1)$-$\tt$-autoreducible, and
$k$-$\T$-autoreducible.  What remains is to show that it is not
$(k-1)$-$\T$-autoreducible. The proof is very similar to what we did in
the previous theorem, so we will not go through every detail
here. Assume $A$ is $k$-$\T$-autoreducible via an oracle Turing machine
$M$. In other words,
\begin{equation}\label{T-auto-red-eq}
\forall x.\;A[x] = M^A(x)
\end{equation}
and we assume that on input $x$, $M$ will not query $x$ itself. By using \pgenericity of $G$ we will show that there exists some $n$ such that \ref{T-auto-red-eq} fails for $x = 0^{n+k}$. In other words,
\begin{equation}
\exists n.\;A[0^{n+k}] = M^A(0^{n+k})
\end{equation}
Similar to what we did in the previous theorem, we define a bounded extension function $f$
such that given $n$ and an initial segment $X\upharpoonright 0^n$, $f$ returns a set of
pairs $(y_i,j_i)$ for $0 \le i \le m$. $y_i$'s are the positions, and must be computable 
in $O(2^n)$ steps, and $j_i$'s are the values that $f$ forces to $y_i$'s. Each $j_i$ must be 
computable in $O(2^{|y_i|})$. Then we will show that if $G$ meets $f$ at $0^{n+k}$, i.e. if 
\ref{meeting conditions} holds, then \ref{T-auto-red-eq} fails for $x = 0^{n+k}$. We will define a function $\alpha$ that under the right conditions simulates $A$ on queries. We use $\alpha$ 
instead of $A$, as the oracle, in the computation of $M$ on input $0^{n+k}$. Similar to the previous theorem, $\alpha$ must be computable in $O(2^n)$ steps. Since in a Turing reduction each query may depend on the answers to the previous queries, we cannot know which queries will be asked in the computation 
of $M^A(0^{n+k})$ in $O(2^n)$ steps. Therefore we define $\alpha$ on every string rather than 
just on the set of queries.

Let $w = vz_p$ be some string. If $|w| < n+k$, then $\alpha$ is defined as:
\[ \alpha(w) = \begin{cases} 
      X[w] & \textrm{ if $1 \le p \le k-1$} \\
      1 & \textrm{ if $p = k \;\wedge \; v \in \SAT \;\wedge \; \exists 1 \le l \le k-1. \; vz_l \in X$} \\
      1 & \textrm{if $p= k+1 \;\wedge \;( v \in \SAT \;\vee \; \exists 1 \le l \le k-1. \; vz_l \in X)$} \\  
      0 & \textrm{otherwise}\\
   \end{cases} \]
and if $|w| \geq n+k$ then:
\[ \alpha(w) = \begin{cases} 
      1 & \textrm{ if $v = 0^n \;\wedge  p = 2$} \\
     \SAT[v] & \textrm{ if $v = 0^n \;\wedge \; p = k$} \\
     1 &  \textrm{if $p = k+1$} \\  
     0 & \textrm{otherwise}\\
\end{cases} \]
Now we run the same oracle Turing machine $M$, but we use $\alpha$ as the oracle instead of $A$. Let
$\QUERY$ be the set of queries asked in this process. $f$ will be defined in a similar fashion, except
that the final pair which completes the diagonalization would be $(0^{n+k}, 1 - M^{\alpha}(0^{n+k}))$.
Note that because there are at most $k-1$ queries in both cases $3$ and $4$ in the definition of $f$,
$vz_r \notin \QUERY$. In other words, the string we are forcing into $G$ (hence into $A$) will never be queried.

Similar to the previous theorem, it can be verified that $\alpha$ and
$A$ agree on all queries, i.e.  $M^A(0^{n+k}) = M^{\alpha}(0^{n+k})$,
if \ref{meeting conditions} holds. It is also easy to prove that
$\alpha$ is computable in $O(2^n)$ steps, therefore $f$ satisfies the
time bounds in Lemma \ref{extension function revised}.
\end{proof}

We now separate unbounded truth-table autoreducibility from bounded
truth-table 
\\ autoreducibility under the Genericity Hypothesis. This is
based on the technique of Ambos-Spies and Bentzien \cite{AmbBen00}
separating the corresponding completeness notions.

\begin{theorem}\label{th:btt vs tt}
If $\NP$ has a \pgeneric language, then there exists a $\tt$-complete set for $\NP$ that is 
$\tt$-autoreducible, but not $\btt$-autoreducible.
\end{theorem}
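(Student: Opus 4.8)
\section*{Proof proposal}

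The plan is to adapt, for autoreducibility, the technique of Ambos-Spies and Bentzien that separates $\btt$-completeness from $\tt$-completeness, in the same spirit as Theorems \ref{th:k-tt} and \ref{th:k-1-T}. Fix a \pgeneric $G\in\NP$ and a $\P$-computable, unbounded ``width'' function $w\colon\N\to\N$ (pinned down in the next paragraph). Define a variable-width analogue of the set in Theorem \ref{th:k-tt}: for $|v|=n$ append easily-parsed tags $z_1,\dots,z_{w(n)+1}$ of length $n$ to $v$, and put $A[vz_l]=G[vz_l]$ for $1\le l\le w(n)-1$, $vz_{w(n)}\in A \iff v\in\SAT \wedge \hat G[v]$, $vz_{w(n)+1}\in A \iff v\in\SAT \vee \hat G[v]$ (with $\hat G[v]=\bigvee_{l=1}^{w(n)-1}G[vz_l]$), and $vz_p\notin A$ for $p>w(n)+1$. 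Closure properties of $\NP$ give $A\in\NP$, and $\SAT[v]$ is recovered from $A$ by the $w(n)+1$ non-adaptive queries $vz_1,\dots,vz_{w(n)+1}$ (read the selectors to learn $\hat G[v]$, then output $A[vz_{w(n)}]$ if $\hat G[v]=1$ and $A[vz_{w(n)+1}]$ otherwise), so $A$ is $\tt$-complete for $\NP$; since this reduction is length-increasing, Lemma \ref{le:honest auto} gives that $A$ is $\tt$-autoreducible.

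For the hard direction, suppose $A$ is $k$-$\tt$-autoreducible via $h,g_1,\dots,g_k$. I would diagonalize at inputs $x=0^n z_1$ of a length with $w(n)=k+1$, so that $A[x]=G[x]$ is a free bit that an extension function may flip. For the resulting extension function to be \emph{bounded}, the width must equal $k+1$ not just at $n$ but at every length a query $g_i(x)$ can reach, i.e.\ throughout the interval of lengths up to $|x|^c$ where $c$ is the fixed polynomial exponent of the autoreduction. So I would give $w$ ``plateaus'': each value $j\ge 2$ is attained on infinitely many intervals, and the intervals grow without bound, so that for every $k$ and $c$ there are infinitely many $n$ with $w\equiv k+1$ on $[n,n^c]$. (Any unbounded \emph{monotone} $w$ is useless here, because then $w(n)$ tends to infinity along the diagonalization lengths and the extension function is unbounded; making $w$ return to each value infinitely often is the one genuinely new ingredient beyond Theorem \ref{th:k-tt}.) On such a ``safe'' length $n$ every query $g_i(x)$ is then either shorter than $x$ --- handled directly, since $\SAT$ on strings shorter than $x$ is computable within the time budget of Lemma \ref{extension function revised} --- or has a base $v$ with $w(|v|)=k+1$, hence with exactly $k$ selectors.

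From here the case analysis is that of Theorem \ref{th:k-tt}, organized by the base $v$ of each query: force a queried selector $vz_p$ ($p\le k$) and set the auxiliary predictor $\alpha$ on it; if only $vz_{k+1}$ (resp.\ only $vz_{k+2}$) is queried, force selectors of $v$ so that $\hat G[v]=0$ (resp.\ $=1$); and if both $vz_{k+1}$ and $vz_{k+2}$ are queried, then since there are only $k$ queries but $k$ selectors there is an unqueried selector $vz_r$, and I force $G[vz_r]=1-\SAT[v]$ and the other selectors of $v$ to $0$, making $\hat G[v]=1-\SAT[v]$ so that $A[vz_{k+1}]=\SAT[v]\wedge(1-\SAT[v])=0$ and $A[vz_{k+2}]=\SAT[v]\vee(1-\SAT[v])=1$ are both pinned down without the values of $\alpha$ on queries ever depending on $\SAT[v]$ --- the expensive evaluation of $\SAT[v]$ sits only inside the forced \emph{value} $1-\SAT[v]$, which Lemma \ref{extension function revised} allows to cost $2^{O(|vz_r|)}$ steps. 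Each base contributes $O(k)$ forced pairs, so $f$ is $O(k^2)$-bounded and is dense along $G$ since it is defined at the infinitely many safe lengths; its final pair $(x,\,1-h(x,\alpha(g_1(x)),\dots,\alpha(g_k(x))))$ makes the autoreduction fail wherever $G$ meets $f$, which by Lemma \ref{extension function revised} it must. I expect the sole real obstacle to be this boundedness point --- arranging $w$ and the diagonalization lengths so that a fixed $k$-$\tt$-autoreduction only ever sees constant width even though the query complexity of $A$ is globally unbounded; everything else is a reprise of the bookkeeping in Theorem \ref{th:k-tt}.
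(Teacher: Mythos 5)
Your proposal is correct in outline but takes a genuinely different route from the paper. The paper does not build a single variable-width set at all: it treats the sets $A_k$ from Theorem~\ref{th:k-tt} (with a padded $\widetilde{\SAT}$ to align lengths) as black boxes, shows via Lemma~\ref{enumeration} that the classes $C_k$ of $k$-$\tt$-autoreducible $\NP$ sets are uniformly computably presentable, and then invokes the Ambos-Spies--Bentzien diagonalization lemma (Lemma~\ref{diagonalization-lemma}) to produce a set $A$ with $A_{=n}=(A_{g(n)})_{=n}$ for a slowly growing, polynomial-time $g$. The lemma immediately gives $A\notin\bigcup_k C_k$, and the slice structure together with the honest reductions from $\widetilde{\SAT}$ to each $A_k$ yields $\tt$-completeness and, via Lemma~\ref{le:honest auto}, $\tt$-autoreducibility. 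Your approach instead bakes the ``slow growth'' into the set itself via a width function $w$ and rediagonalizes directly with a bounded extension function, rather than reusing Theorem~\ref{th:k-tt} as a completed fact. What the paper's route buys is modularity: it never reopens the case analysis of Theorem~\ref{th:k-tt}, and the only new technical content is the enumeration lemma (the observation that ``$B$ is $k$-$\tt$-autoreducible'' is a computably presentable property inside $\NP$). What your route buys is self-containment and transparency about \emph{why} the final set is $\tt$- but not $\btt$-complete --- the width literally grows --- at the cost of having to get the plateau bookkeeping right. Your instinct that the plateau arrangement (each width $k+1$ recurring on intervals $[n,n^c]$ for every $c$, infinitely often) is the one new ingredient, and that everything else is Theorem~\ref{th:k-tt} verbatim, is exactly the delicate point; it is achievable with a tower-spaced, $\P$-computable $w$ that cycles through each value $\geq 2$ infinitely often, but you should be aware the paper sidesteps the need to verify any of it by letting Lemma~\ref{diagonalization-lemma} manufacture the slowly growing $g$ for free.
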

Before proving Theorem \ref{th:btt vs tt}, we need a few definitions
and two lemmas.

A complexity class $C$ is {\em computably presentable} if there is a
computable function $f : \N \to \N$ such that $C = \{ L(M_{f(i)}) \mid
i \in \N \}$. A sequence of classes $C_0,C_1,\ldots$ is {\em uniformly
  computably presentable} if there is a computable function $f : \N
\times \N \to \N$ such that $C_j = \{ L(M_{f(j,i)}) \mid i \in \N \}$
for all $j \in \N$. A reducibility $\calR$ is {\em computably
  presentable} if there is a computable function $f : \N \to \N$ such
that $M_{f(1)},M_{f(2)},\ldots$ is an enumeration of all $\calR$-reductions.


\begin{lemma}\label{enumeration}
If $C$ is a computably presentable class which is closed under finite
variants and $\calR$ is a computably presentable reducibility, then 
$C_{\calR\text{-}auto} = \{B \in C \;|\; B \; is \; \calR$-$autoreducible \}$ is also computably presentable.
\end{lemma}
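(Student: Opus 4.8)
The plan is to build the presentation of $C_{\calR\text{-}auto}$ by a dovetailing argument that, given an index $i$ for a machine $M_i$ presenting a member of $C$ and an index $e$ for an $\calR$-reduction $N_e$ from the presentation of $\calR$, outputs a machine whose language is $L(M_i)$ if $N_e$ happens to witness that $L(M_i)$ is $\calR$-autoreducible, and a ``corrected'' finite variant (or a default trivial set already in $C$) otherwise. Concretely, let $g : \N \to \N$ present $C$ (so $C = \{L(M_{g(i)}) \mid i \in \N\}$) and let $f_\calR : \N \to \N$ enumerate all $\calR$-reductions $N_{f_\calR(1)}, N_{f_\calR(2)}, \ldots$. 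I would define a computable $F : \N \times \N \to \N$ on a pair $\langle i, e\rangle$ as follows: $F(\langle i,e\rangle)$ is (an index for) the machine $P_{i,e}$ that on input $x$ first checks the ``syntactic autoreduction'' condition for $x$ — namely that every query $N_e$ makes on input $x$ (to whatever oracle) is different from $x$ itself — which is decidable in finite time since $N_e$ runs in polynomial time; and then checks the ``semantic'' condition that $L(M_i)[x]$ agrees with the value $N_e$ computes when answering its queries $q$ by $L(M_i)[q]$. If both checks pass, $P_{i,e}$ outputs $L(M_i)[x]$; if either fails, $P_{i,e}$ outputs $0$ (equivalently simulates some fixed machine for $\emptyset$, or for whatever trivial set we have fixed to lie in $C$).

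The key steps, in order, are: (1) fix the two presentations $g$ and $f_\calR$ and observe that both the query-list of $N_e$ on $x$ and the value $N_e^{L(M_i)}(x)$ are computable from $\langle i,e,x\rangle$; (2) define $P_{i,e}$ as above and argue its index is computable from $\langle i,e\rangle$ by the s-m-n theorem, so $F$ is computable; (3) show soundness: for every $i,e$, $L(P_{i,e}) \in C$ — if $N_e$ is a genuine $\calR$-autoreduction of $L(M_i)$ then all checks pass on all $x$ and $L(P_{i,e}) = L(M_i) \in C$; otherwise there is some $x$ where a check fails, and here I use that $N_e$ is a \emph{total} polynomial-time reduction so the set of ``good'' $x$ is decidable and $L(P_{i,e})$ is either a finite variant of $L(M_i)$ with the bad positions zeroed out (still in $C$ by closure under finite variants, \emph{provided} only finitely many positions are bad) or is $\emptyset \in C$; (4) show completeness: every $\calR$-autoreducible $B \in C$ equals $L(P_{i,e})$ for the $i$ with $L(M_i) = B$ and the $e$ with $N_e$ the autoreduction. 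Thus $C_{\calR\text{-}auto} = \{L(M_{F(\langle i,e\rangle)}) \mid i,e \in \N\}$, which after composing with a pairing function gives a computable presentation.

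The main obstacle is step (3), the soundness argument, specifically making $L(P_{i,e})$ land \emph{inside} $C$ when $N_e$ is not a correct autoreduction: if infinitely many $x$ fail the semantic check, then zeroing those positions does not obviously keep us in $C$ (it need not be a finite variant of $L(M_i)$), so the clean fix is to have $P_{i,e}$ abandon $L(M_i)$ entirely and output the characteristic function of a \emph{fixed} set that we know is in $C$ — and since $C$ is closed under finite variants and nonempty, it contains $\emptyset$ (or any finite set), so defaulting to $0$ everywhere is safe. One must also be slightly careful that the ``semantic check at $x$'' — comparing $L(M_i)[x]$ with $N_e^{L(M_i)}(x)$ — is genuinely computable, which it is because $N_e$ asks only finitely many (indeed polynomially many) queries and each is answered by running $M_i$; there is no circularity since $N_e$ never queries $x$ on input $x$ once the syntactic check has passed, but in any case we are merely \emph{testing} a candidate equation, not using it to define membership. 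With the default-to-$\emptyset$ design, step (3) becomes routine and the rest of the argument is bookkeeping with the s-m-n theorem.
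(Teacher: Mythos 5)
Your proposal has the right skeleton — pair up an index for a member of $C$ with an index for a candidate $\calR$-reduction, output that member when the candidate is a genuine autoreduction, and output a ``safe'' default otherwise — and you correctly identify that the sticking point is making the default land in the right place. But the way you resolve that sticking point has two genuine errors, and a third, structural one that you leave unaddressed.

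First, closure under finite variants does \emph{not} put $\emptyset$ in $C$. A class that contains only infinite languages (say) can be closed under finite variants without containing any finite set, so ``defaulting to $0$ everywhere'' is not safe. Second, and more importantly, you are proving the wrong inclusion: you argue $L(P_{i,e}) \in C$, but a presentation of $C_{\calR\text{-}auto}$ must have its range \emph{equal} to $C_{\calR\text{-}auto}$, so the default must be a fixed set that is both in $C$ and $\calR$-autoreducible. You never argue the default is $\calR$-autoreducible. The paper's proof handles this by first disposing of the case $C_{\calR\text{-}auto} = \emptyset$ (trivially presentable by convention), then fixing an arbitrary $A \in C_{\calR\text{-}auto}$ as the default; closure under finite variants then keeps every finite variant of $A$ inside $C_{\calR\text{-}auto}$, which is exactly what is needed.

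Third, your ``abandon $L(M_i)$ entirely'' fix is not realized by the machine you describe. Your $P_{i,e}$ checks only the current input $x$ and zeroes out the bad positions, so when $N_e$ fails on an infinite, coinfinite set of inputs, $L(P_{i,e})$ is neither $L(M_i)$, nor a finite variant of it, nor $\emptyset$ — it is some unrelated set with no reason to lie in $C$, let alone $C_{\calR\text{-}auto}$. To ``abandon entirely'' you would need to know in advance that $N_e$ is bad, which is undecidable. The paper's $M_n$ avoids this by, on input $x$, re-running the check on \emph{all} $y<x$ and switching to the default $A$ the moment any earlier check has failed. This guarantees that the set of inputs where the machine follows $L(N_i)$ is a downward-closed (hence finite, if the reduction ever fails) initial segment, so the output is either $L(N_i)$ exactly or a finite variant of $A$, both of which lie in $C_{\calR\text{-}auto}$. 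With that prefix-checking mechanism and a default drawn from $C_{\calR\text{-}auto}$ rather than an assumed $\emptyset \in C$, your argument would be correct and essentially the paper's.
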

\begin{proof}
We prove the lemma for polynomial-time Turing autoreducibility, but similar proofs can be constructed for 
any kind of autoreduction that is computably presentable. For simplicity, we use $C_{auto}$ for 
$C_{poly \text{-}T\text{-}auto}$ in the rest of the proof. If
$C_{auto} = \emptyset$ then it is computably presentable by
convention. Assume $C_{auto} \neq \emptyset$, and fix some set $A \in
C_{auto}$. Since $C$ is closed under finite variants, any finite
variation of $A$ must also belong to $C_{auto}$.

Let $N_1$,$N_2$,... be a presentation of $C$, and $T_1$,$T_2$,... be an enumeration of deterministic polynomial-time oracle Turing machines. For every pair $n = \pair{i,j}$ where $i,j \geq 1$ we define a Turing machine $M_n$ as follows:\\
\begin{tabular}{lll}
\toprule
$M_n$ &    &\\
\midrule
input $x$ &     & \\
for each y with $y < x$ do & &\\
& test that $y \in L[N_i] \Leftrightarrow y \in L(T_j,L(N_i))$, & \\
& and $y$ itself has not been queried by $T_j$ &\\
if tests are true & &\\
& then accept $x$ iff $x \in L(N_i)$ & \\
& else accept $x$ iff $x \in A$ & 
\end{tabular}\\

Let $L$ be an arbitrary language in $C_{auto}$. There must be some $i,j \geq 1$ such that $L = L(N_i)$ 
and $T_j$ computes an $\calR$-autoreduction on $L$. Therefore $M_n$ computes $L$ when $n = \pair{i,j}$.
This means that every language in $C_{auto}$ is accepted by some
Turing machine $M_n$. On the other hand, for every $n=\pair{i,j}$, 
 if $T_j$ does not compute an $\calR$-autoreduction on $L(N_i)$, then
 $L(M_n)$ is a finite variant of $A$. Since $C$ is assumed to be closed under finite variants, $L(M_n) \in C_{auto}$. 
\end{proof}
\begin{lemma_cite}{Ambos-Spies and Bentzien \cite{AmbBen00}}\label{diagonalization-lemma}
Let $C_0,C_1,\ldots$ be classes such that,\\
$(1)$. $C_0,C_1,\ldots$ is uniformly computably presentable.\\
$(2)$. Each $C_i$ is closed under finite variants.\\
$(3)$. There is a decidable set $D$ such that $D \subseteq \{0\}^* \times \Sigma ^*$ , \\
and $D^{[n]} = \{x | <0^n,x> \in D \} \notin C_n$.\\
$(4)$. $f : N \rightarrow N$ is a non-decreasing unbounded computable function.\\
Then there exists a set $A$ and a function $g : N \rightarrow N$ such that:\\
$(5)$. $A \notin \bigcup_{n = 0}^{\infty}{C_n}$.\\
$(6)$. $\forall n.\; A_{=n} = D_{=n}^{[g(n)]}$.\\
$(7)$. $g$ is polynomial-time computable with respect to the unary representation of numbers.\\
$(8)$. $\forall n.\; g(n) \leq f(n)$.
\end{lemma_cite}


\begin{proof}[Proof of Theorem \ref{th:btt vs tt}]
Let $\widetilde {\SAT} = \{ 0^n1x \;|\; n \geq 0 \;\textrm{ and }\; x \in \SAT \} $. It is easy to see that 
$\widetilde {\SAT}$ is $\NP$-complete, and $\widetilde{\SAT} \in \DTIME(2^n)$. For every $k \geq 0$, let $A_k$ 
be a $(k+3)$-$\tt$-complete set constructed as before, by using $\widetilde {\SAT}$ instead of $\SAT$, 
and fix a \pgeneric set $G \in \NP$ for the rest of the proof. Note that $A_k$ is also 
$(k+3)$-$\tt$-autoreducible, but not $(k+2)$-$\tt$-complete or $(k+2)$-$\tt$-autoreducible. 
Define $D = \{<0^k,x> \; | \; k \geq 0 \; and \; x \in A_k \}$. Since $A_k \in \NP$ uniformly in $k$, 
$D \in \NP$. Let $C_k = \{ B \in \NP \; | \; B \; \text{is} \; k\text{-}\tt\text{-autoreducible} \}$ for
$k \geq 1$ and $C_0 = C_1$. $\NP$ is computably presentable and
closed under finite variants, therefore by Lemma \ref{enumeration},
$C_k$'s are computably presentable. In fact, they are uniformly
computably presentable by applying the proof of Lemma
\ref{enumeration} uniformly. It is also easy to see that each $C_k$ is closed under finite variants. Therefore $C_k$'s satisfy the conditions of Lemma \ref{diagonalization-lemma}. It follows from the definition of $D$ that $D^{[k]} = A_k$, and we know that $A_k \notin C_k$ by construction of $A_k$. Therefore, if we take $f(n) = min \{m\;|\; 2m+3 \geq n\}$, by Lemma \ref{diagonalization-lemma} there exist $A$ and $g$ such that properties $(5)$-$(8)$ 
from the lemma hold.\\
It follows from $(6)$ and $(7)$ that $\forall n.\; A_{=n} = D_{=n}^{[g(n)]}$, and $g$ is polynomial time 
computable with respect to unary representation of numbers. This implies that $A \le_\m^\p D$, therefore 
$A \in \NP$. Moreover, by $(5)$ from the lemma, $A \notin \bigcup_{n \geq 0} {C_n}$, which means for 
every $k \geq 1$, $A$ is not $k$-$\tt$-autoreducible. In other words $A$ is not $\btt$-autoreducible.

To show that $A$ is $\tt$-autoreducible, we will show that $\SAT \le_{\tt}^\p A$ via honest reductions, and then it follows from Lemma \ref{le:honest auto} that $A$ is $\tt$-autoreducible. To define the truth-table reduction from $\SAT$ to $A$, fix $x$ with $|x| = n$. For every $k,m \geq 0$ we have $\SAT[x] = \widetilde {\SAT}[0^m1x]$,
 and $\widetilde {\SAT}[0^m1x]$ can be computed by making $(k+3)$ independent queries from 
 $(A_k)_{=m+1+n+k+2}$ in polynomial time, uniformly in $x$, $k$, and $m$(This follows from 
 $(k+3)$-$\tt$-completeness of $A_k$, and the way $A_k$ is defined using $\widetilde {\SAT}$. 
 $(7)$ from Lemma \ref{diagonalization-lemma} implies that:
 \begin{equation}
 A_{=2n+3} = (D^{[g(2n+3)]})_{=2n+3} = (A_{g(2n+3)})_{=2n+3}
 \end{equation}
We also know that $g(2n+3) \le f(2n+3) \le n$ for all $n$. Using all these facts, here is the truth-table 
reduction from $\SAT$ to $A$:\\
For $x$ with $|x| = n$, compute $g(2n+3)$, and let $k = g(2n+3)$ and $m = n - k$. Therefore:
\begin{equation}
(A_k)_{=m+1+n+k+2} = (A_{g(2n+3)})_{=2n+3} = A_{=2n+3}
\end{equation}
We know that $\SAT[x] = \widetilde {\SAT}[0^m1x]$ can be computed by making $(k+3)$ independent queries from 
$(A_k)_{=m+1+n+k+2}$. This means $\SAT[x] = \widetilde {\SAT}[0^m1x]$ can be recovered by making $g(2n+3)$ 
queries from $A_{=2n+3}$.\\
Note that all these queries are longer than $x$. Therefore, by Lemma
\ref{le:honest auto}, $A$ is $\tt$-autoreducible. 
\end{proof}
\section{Stronger Separations Under a Stronger Hypothesis}\label{sec:stronger}
Our results so far only separate $k$-tt-autoreducibility from
$(k-2)$-T-autoreducibility for $k \geq 3$ under the genericity
hypothesis. In this section we show that a stronger hypothesis
separates $k$-tt-autoreducibility from $(k-1)$-T-autoreducibility, for
all $k \geq 2$. We note that separating $k$ nonadaptive queries from
$k-1$ adaptive queries is an optimal separation of bounded query
reducibilities.

First we consider $2$-$\tt$-autoreducibility versus
$1$-$\tt$-autoreducibility (equivalently,
$1$-$\T$-autoreducibility). Pavan and Selman \cite{PavSel04} showed
that if $\NP \cap \coNP$ contains a $\DTIME(2^{n^\epsilon})$-bi-immune set, then
$2$-$\tt$-completeness is different from $1$-$\tt$-completeness for
$\NP$. We show under the stronger hypothesis that $\NP \cap \coNP$
contains a \pgeneric set, we can separate the autoreducibility
notions.

\begin{theorem}\label{th:2tt vs 1tt}
If $\NP \cap \coNP$ has a \pgeneric language, then there exists a $2$-$\tt$-complete set for $\NP$ that is 
$2$-$\tt$-autoreducible, but neither $1$-$\tt$-complete nor $1$-$\tt$-autoreducible.
\end{theorem}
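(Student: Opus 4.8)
The plan is to mimic the structure of the genericity-based arguments from Section~\ref{sec:genericity}, but to use the stronger hypothesis that the \pgeneric set lives in $\NP\cap\coNP$. First I would fix a \pgeneric set $G\in\NP\cap\coNP$ and build a set in the spirit of the construction separating $2$-$\tt$-completeness from $1$-$\tt$-completeness. Using strings $z_1,z_2,z_3$ (say, the first three strings of some fixed length) I would set
\[
A = \{xz_1 \mid x\in \hat G\}\cup\{xz_2\mid x\in \hat G\cap\SAT\}\cup\{xz_3\mid x\in\hat G\cup\SAT\},
\]
where $\hat G$ is the obvious decoding of $G$. Then $\SAT\le^\p_{2\text{-}\tt}A$: on input $x$, query $xz_2$ and $xz_3$; if $x\in\hat G$ these return $\SAT[x]$ in two different ways, and the point of putting the \emph{complement-friendly} set $G\in\NP\cap\coNP$ into the first coordinate is that membership in $\hat G$ can be decided without a query. (One also needs the first coordinate to make the whole set $\NP$: closure of $\NP$ under the relevant operations gives $A\in\NP$.) This shows $A$ is $2$-$\tt$-complete, and since all queries from $\SAT$ to $A$ are length-increasing, Lemma~\ref{le:honest auto} gives that $A$ is $2$-$\tt$-autoreducible.

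Next I would show $A$ is not $1$-$\tt$-complete and not $1$-$\tt$-autoreducible. Both are diagonalization-against-$G$ arguments of the type used in Theorem~\ref{th:2T vs 2tt}: assume a $1$-$\tt$ reduction (either $\SAT\le^\p_{1\text{-}\tt}A$ for completeness, or the autoreduction equality $A[x]=h(x,A[g(x)])$ with $g(x)\neq x$), restrict attention to inputs of the form $x=0^{n}z_j$, and define a $1$-bounded extension function $f$ that is dense along $G$ and whose being met at such an $x$ forces the single queried bit of $A$ to a known value while simultaneously forcing $A[x]$ to the opposite of what $h$ computes. As in the earlier proofs, I would case on the Boolean function $h(x,\cdot)$ (constant, identity, negation) and on whether the unique query has the same ``value'' as $x$; the cases where forcing alone does not immediately refute the reduction are exactly the ones where $\SAT[\cdot]$ becomes decidable a.e., contradicting \pgenericity of $G$. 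The computability/time bounds needed for Lemma~\ref{extension function revised} are routine here because there is only one query.

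The main obstacle — and the reason the hypothesis must be strengthened to $\NP\cap\coNP$ rather than just $\NP$ — is exactly the asymmetry that a $1$-$\tt$ reduction cannot afford a ``probe'' query to learn whether $x\in\hat G$. In the $2$-$\T$ construction of Theorem~\ref{th:2T vs 2tt} the first query $0x$ to $G$ told the reduction which of the two $\SAT$-encodings to consult; with only one query available we must instead \emph{compute} $\hat G$-membership deterministically, which is precisely what $G\in\NP\cap\coNP$ buys us (so that both $G$ and its complement are in $\NP$, keeping $A\in\NP$, while $\hat G$ is effectively a parameter the reduction can evaluate on the query side of the extension-function argument). Getting this bookkeeping right — ensuring $A\in\NP$, ensuring the forced bit of $G$ is never the queried string itself, and keeping $\alpha$ computable within the $O(2^{cn})$ budget of Lemma~\ref{extension function revised} when $\SAT[v]$ for a long $v$ is involved — is where the care goes, but it parallels the already-worked-out argument for the $k\geq 3$ case in Theorem~\ref{th:k-tt vs (k-1)-T autored}.
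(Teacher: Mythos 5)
Your construction does not give a $2$-$\tt$-complete set, and the reason you offer for why it should is based on a misconception about $\NP\cap\coNP$. With
\[
A = \{xz_1 \mid x\in \hat G\}\cup\{xz_2\mid x\in \hat G\cap\SAT\}\cup\{xz_3\mid x\in\hat G\cup\SAT\},
\]
the two query answers $(A[xz_2],A[xz_3])$ do \emph{not} determine $\SAT[x]$: the pair $(0,1)$ arises both when $x\in\hat G\setminus\SAT$ (where $\SAT[x]=0$) and when $x\in\SAT\setminus\hat G$ (where $\SAT[x]=1$). To disambiguate you must know $\hat G[x]$, and your plan for obtaining it is that ``membership in $\hat G$ can be decided without a query'' because $G\in\NP\cap\coNP$. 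That is false: $\NP\cap\coNP\subseteq\P$ is not implied by the hypothesis, and in fact a \pgeneric set is never in $\P$, so $\hat G$ cannot be evaluated by the polynomial-time truth-table evaluator. You have essentially rebuilt the $2$-$\T$-complete set of Theorem~\ref{th:2T vs 2tt} and then asserted that the adaptive probe query can be replaced by a free computation; it cannot.

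The paper instead takes $A = (G\cap\SAT)0\;\dot\cup\;(\overline G\cap\SAT)1$. The point is that $G$ and $\overline G$ partition $\{0,1\}^*$, so
\[
x\in\SAT \iff x0\in A \;\vee\; x1\in A,
\]
which is a genuine (nonadaptive, oblivious) $2$-$\tt$ reduction that never needs to know $G[x]$. Here the role of $G\in\NP\cap\coNP$ is not to make $G$ computable; it is to ensure $\overline G\in\NP$ and hence $A\in\NP$. This is the missing idea in your proposal, and without it the completeness half of the theorem does not go through. Your subsequent diagonalization plan (a $1$-bounded extension function, cased on the Boolean function $h(x,\cdot)$ and on whether the lone query collides with $x$'s ``value'') is the right shape for showing non-$1$-$\tt$-autoreducibility once the correct $A$ is in hand; the paper carries out exactly such an argument, concluding with a polynomial-time algorithm for $\SAT$ in the surviving cases. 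One further small divergence: the paper does not run a second diagonalization to rule out $1$-$\tt$-completeness — it simply invokes the result of Gla{\ss}er et al.\ \cite{Glasser07} that every nontrivial $1$-$\tt$-complete set for $\NP$ is $1$-$\tt$-autoreducible, so non-$1$-$\tt$-autoreducibility already gives non-$1$-$\tt$-completeness for free.
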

\begin{proof}
Assume $G \in \NP \cap \coNP$ is \pgeneric, and let $A = (G \cap \SAT) \dot\cup (\overline{G} \cap \SAT)$, where 
$\overline{G}$ is $G$'s complement, and $\dot\cup$ stands for disjoint union. We implement  disjoint union as 
 $A = (G \cap \SAT)0 \; \dot\cup \;(\overline{G} \cap \SAT)1$. It follows from closure properties of $\NP$ and the fact that $G \in \NP \cap \coNP$ that $A \in \NP$. It follows from definition of $A$ that for every $x$,
 $x \in \SAT \leftrightarrow (x0 \in A \vee x1 \in A)$. This means $\SAT \le_{2\tt}^\p A$. Therefore
  $A$ is $2$-$\tt$-complete for $\NP$. Since both queries in the above reduction are honest, in fact length increasing, it follows from Lemma \ref{le:honest auto} that $A$ is $2$-$\tt$-autoreducible. To get a contradiction assume that $A$ is $1$-$\tt$-autoreducible via polynomial-time computable functions $h$ and $g$. In other words,
\begin{equation}\label{1tt-autored}
 \forall x.\; A(x) = h(x , A[g(x)])
\end{equation}  
and $g(x) \neq x$. Let $x = y0$ for some string $y$, then \eqref{1tt-autored} turns into
\begin{equation}
 \forall y.\; G \cap \SAT[y] = h(y0 , A[g(y0)])
\end{equation}
and $g(y0) \neq y0$.
We define a bounded extension function $f$ whenever $\SAT[y] = 1$ as follows.
\begin{itemize}
  \item Consider the case where $g(y0) = z0$ or $z1$ and
$z > y$.
If $g(y0) = z0$ then $f$ forces
$G[z] = 0$, and if $g(y0) = z1$ then $f$ forces $G[z] = 1$. $f$ also
forces $G[y] = 1 - h(y0 , 0)$. Since $g$ and $h$ are computable in
polynomial time, so is $f$.
\item On the other hand, if $g(y0) = z0$ or $z1$ and $z < y$ then
  define $f$ such that it forces $G[y] = 1 - h(y0, A[g(y0)])$.  Then
  $f$ polynomial-time computable in this case as well because $A$ may
  be computed on $g(y0)$ by looking up $G[z]$ from the partial
  characteristic sequence and deciding $\SAT[z]$ in $2^{O(|z|)}$ time.
\item If $g(y0) = y1$ and $h(y0,.) = c$ is a constant function, 
  then define $f$ such that it forces $G[y] = 1 - c$.   
\end{itemize}
If $g(y0) \neq y1 \wedge \SAT[y] = 1$ for infinitely many $y$, it follows
from the \pgenericity of $G$ that $G$ has to meet $f$, but this refutes the autoreduction.
Similarly, $g(y0) = y1 \wedge h(y0,.) = const \wedge \SAT[y] = 1$ cannot happen for 
infinitely many $y$'s.
As a result, $(g(y0) = y1 \vee \SAT[y] = 0)$ and $h(y0,.)$ is not constant
 for all but finitely many $y$'s. 
If $g(y0) = y1$ then $h$ says either $G \cap \SAT[y] = \overline{G} \cap \SAT [y]$ or 
$G \cap \SAT[y] = \neg ( \overline{G} \cap \SAT [y])$. It is easy to see
this implies $\SAT[y]$ has to be $0$ or $1$, respectively. 
Based on the facts above, we define Algorithm \ref{alg:1tt} that
decides $\SAT$ in polynomial time. This contradicts the assumption
that $\NP \cap \coNP$ has a $\p$-generic language.
\begin{figure}[h!]
\begin{algorithm}[H]
 input y\;
  \eIf{$g(y0) \neq y1 \vee h(y0,.)\textrm{ is constant}$}{
   Output NO\;
   }{
   \eIf{$h(y0,.)\textrm{ is the identity function}$}{
   Output YES\;
   }{
   Output NO\;}
 }
\caption{A polynomial-time algorithm for $\SAT$}\label{alg:1tt}
\end{algorithm}
\end{figure}

It is proved in \cite{Glasser07} that every nontrival
$1$-$\tt$-complete set for $\NP$ is $1$-$\tt$-autoreducible, so it follows that $A$ is not $1$-$\tt$-complete.
\end{proof}

We will show the same hypothesis on $\NP \cap \coNP$ separates
$k$-$\tt$-autoreducibilty from $(k-1)$-$\T$-autoreducibility for all
$k \geq 3$. First, we show the corresponding separation of
completeness notions.

\begin{theorem}\label{th:k-tt vs (k-1)-T complete}
If $\NP \cap \coNP$ contains a \pgeneric set, then for every $k \geq 3$ there exists a $k$-$\tt$-complete 
set for $\NP$ that is not $(k-1)$-$\T$-complete.
\end{theorem}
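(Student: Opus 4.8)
The plan is to mimic the construction from Theorem \ref{th:k-tt}, but to use the extra power of $\NP\cap\coNP$ to save one query on the positive side, so that $k$-$\tt$-completeness is achieved without the ``fan-out'' string that forced the gap in the earlier argument. Concretely, fix a \pgeneric $G\in\NP\cap\coNP$, let $z_1,\dots,z_{k+1}$ be the first $k+1$ strings of some fixed length, and define sets $\hat G_m$, $\hat G=\bigcup_{m=1}^{k-1}\hat G_m$ as before, but now build the complete set so that only $k-2$ of the ``$G$-coordinates'' are needed to detect membership in $\hat G$ together with one $\cap$-coordinate and one $\cup$-coordinate: something like
\[
A=\bigcup_{m=1}^{k-2}\{xz_m\mid x\in\hat G_m\}\;\cup\;\{xz_{k-1}\mid x\in\hat G\cap\SAT\}\;\cup\;\{xz_k\mid x\in\hat G\cup\SAT\},
\]
where $\hat G$ is now $\bigcup_{m=1}^{k-2}\hat G_m$. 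The point is that membership in $\SAT$ is recovered by querying $xz_1,\dots,xz_{k-2}$ and then, \emph{nonadaptively}, both $xz_{k-1}$ and $xz_k$: if some $xz_i\in A$ ($i\le k-2$) then $\SAT[x]=A[xz_{k-1}]$, and otherwise $\SAT[x]=A[xz_k]$, and in either case this value is among the $k$ bits queried, so $A$ is $k$-$\tt$-complete (and the queries are length-increasing, so Lemma \ref{le:honest auto} gives $k$-$\tt$-autoreducibility). First I would verify this completeness claim and the basic closure properties ($A\in\NP$, using $G\in\NP\cap\coNP$ exactly as in Theorem \ref{th:2tt vs 1tt} so that the ``$\cup$'' coordinate stays in $\NP$).

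Next I would show $A$ is not $(k-1)$-$\T$-complete. Suppose $\SAT\le_{(k-1)\text{-}\T}^\p A$ via an oracle machine $M$. As in Theorem \ref{th:k-tt}, I would define, for each input $0^{n}w$ of the appropriate form, a bounded extension function $f$ along $G$: run $M$ on the input, collect the (at most $k-1$) queries, and use $f$ to force the $G$-values underlying those queries so that the simulated oracle $\alpha$ agrees with $A$ on all of them, then force one more $G$-value to flip the answer $M$ produces. The auxiliary function $\alpha$ plays the same role as before — it computes $A$ on queries without computing $\SAT$ on long strings. The key counting point is that with only $k-1$ queries, whenever both $vz_{k-1}$ and $vz_k$ appear among the queries for some value $v$, at least one of $vz_1,\dots,vz_{k-2}$ is \emph{not} queried; that unqueried coordinate is the one we force to $1-\SAT[v]$, so $\alpha$ need never evaluate $\SAT[v]$ on a queried string and stays computable in $O(2^n)$ steps. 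Density of $f$ along $G$ is guaranteed by restricting attention to inputs $0^{n+k}$ (or whatever the analogue is for the new index block) and invoking Lemma \ref{extension function revised}; meeting $f$ refutes the reduction, contradicting \pgenericity.

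The main obstacle I anticipate is pinning down the exact index bookkeeping so that the ``saved query'' really does come from the $\NP\cap\coNP$ assumption and not just from reindexing — i.e., making precise why $k$-$\tt$ \emph{completeness} needs the $\cup$-coordinate to be in $\NP$ (which fails for $\SAT$-based constructions but holds here because $G\in\coNP$), and simultaneously why $k-1$ \emph{adaptive} queries are insufficient. The parity of cases in the diagonalization (which coordinates among $vz_1,\dots,vz_{k-2},vz_{k-1},vz_k$ are queried, and whether $v=0^n$) will require a careful but routine case analysis parallel to the one in Theorem \ref{th:k-tt}; I would state the $\alpha$-versus-$A$ agreement as a lemma and check each case, then note that the time bounds for $f$ hold for exactly the same reason as there. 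The subsequent autoreducibility separation (Theorem \ref{th:k-tt vs (k-1)-T autored}) should then follow by showing this same $A$ is not $(k-1)$-$\T$-autoreducible via an essentially identical extension-function argument applied to a putative autoreduction instead of a reduction from $\SAT$.
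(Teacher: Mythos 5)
Your construction is obtained from the set in Theorem \ref{th:k-tt} by simply replacing $k$ with $k-1$, and this cannot work: the resulting set \emph{is} $(k-1)$-$\T$-complete. Indeed, in your $A$ the first $k-2$ coordinates give the raw $\hat G_m$ bits, $A[xz_m]=\hat G_m[x]$, so an adaptive reduction can query $xz_1,\dots,xz_{k-2}$ to learn whether $x\in\hat G$ and then spend its final query on $xz_{k-1}$ (if $x\in\hat G$) or $xz_k$ (if $x\notin\hat G$) to read off $\SAT[x]$, using only $k-1$ adaptive queries. Because $\hat G\in\NP$ and $\SAT\in\NP$, both the $\cap$- and $\cup$-coordinates are already in $\NP$ with no use of $G\in\coNP$, so the hypothesis on $\NP\cap\coNP$ does no work in your construction, and the thing you then try to diagonalize against actually exists. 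The paper's set is genuinely different: it puts $\cap\,\SAT$ on \emph{every} coordinate, $A[xz_m]=G_m[x]\wedge\SAT[x]$ for $m\le k-1$, with the last coordinate being $\bigl(\bigcap_{m=1}^{k-1}\overline{G_m}\bigr)\cap\SAT$. The $\SAT$-mask destroys the adaptive shortcut (when $\SAT[x]=0$ you learn nothing about $G_m[x]$ from $A$), and $\overline{G_m}$ is where $G\in\coNP$ is essential to keep $A\in\NP$.

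There is a second gap in the diagonalization itself. You propose to run $M$ on an input, force $G$-values so the simulated oracle matches $A$, and then ``force one more $G$-value to flip the answer $M$ produces.'' But if you are refuting a reduction $\SAT\le_{(k-1)\text{-}\T}^\p A$, the quantity you must flip is $\SAT[0^n]$, which cannot be forced by an extension function acting on $G$. The paper sidesteps this by refuting $G_k\le_{(k-1)\text{-}\T}^\p A$ instead, where $G_k=\{x\mid xz_k\in G\}$ is a slice of $G$ that is in $\NP$ but deliberately \emph{not} used in the definition of $A$ (only $G_1,\dots,G_{k-1}$ appear). Forcing $G[0^nz_k]$ is therefore free: it changes $G_k[0^n]$ without changing $A$ on any query, so the final pair $(0^nz_k,\,1-M^\alpha(0^n))$ can always be added. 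Your sketch has no analogue of this ``unused slice,'' and without it the final forcing step has nothing to act on.
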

\begin{proof}
Assume $G \in \NP \cap \coNP$ is \pgeneric, and let $G_m = \{x \;|\; xz_m \in G \}$ 
for $1 \le m \le k$ where $z_1,...,z_k$ are the first $k$ strings of length $k$ as before.
Define 
\begin{equation}
A = \Big[\bigcup _{m = 1}^{k-1} {\{xz_m \;|\; x \in G_m \cap \SAT \}}\Big] \cup 
    \{xz_k \;|\; x \in \big[ \cap _{m = 1}^{k-1}{\overline{G_m}} \big] \cap \SAT \}
\end{equation}
It is easy to check that $x \in \SAT \Leftrightarrow \bigvee _{m =1}^{k} ({xz_m \in A})$, therefore
$\SAT \le _{k-\tt}^{p} A$. It also follows from the fact that $G \in \NP \cap \coNP$ and the closure
properties of $\NP$ that $A \in \NP$, so $A$ is $k$-$\tt$-complete for $\NP$, in fact $k$-$\dtt$-complete.

We claim that $A$ is not $(k-1)$-$\T$-hard for $\NP$. For a contradiction, assume that
$G_k \le_{(k-1)-\T}^{p} A$. In other words, assume that there exists an oracle Turing machine $M$ 
such that 
\begin{equation}
\forall x. \; G_k [X] = M^A[x]
\end{equation}
where $M$ runs in polynomial time, and makes at most $(k-1)$ queries on every input. Given $n$ and 
$X \upharpoonright 0^n$, we define a function $\alpha$ as follows.\\
If $w = vz_p$ and $|w| < n+k$ then
\[ \alpha(w) = \begin{cases} 
      X[w] \wedge \SAT[v] & \textrm{ if $1 \le p \le k-1$} \\
      \big[\bigwedge_{l=1}^{k-1} {(1 - X[vz_l])} \big] \wedge \SAT[v] & 
      \textrm{ if $p = k$} \\
      0 & \textrm{otherwise}\\
   \end{cases} \]
It is easy to see that $\alpha$ is defined in a way that if 
$X \upharpoonright 0^n = G \upharpoonright 0^n$ then $\alpha (w) = A[w]$.\\
On the other hand, if $|w| \geq n+k$ then $\alpha (w) =  0$ all the time. Later when we define the 
extension function we guarantee that $A[w] = 0$ for all long queries, by forcing the right
values into $G$, which implies $A[w] = \alpha (w)$ for all queries. But before doing that, we run
$M$ on input $0^n$ with $\alpha$ as the oracle, and define $\QUERY$ to be the set of all queries
made in this computation. We know that $|\QUERY| \le k-1$ therefore one of the following 
cases must happen:
\begin{enumerate}
\item $xz_k \notin \QUERY$.
\item $xz_k \in \QUERY$, and $\exists 1 \le l \le k-1. \; xz_l \notin \QUERY$.
\end{enumerate}
Define a bounded extension function $f$ based on the above cases. Given $n$ and $X \upharpoonright 0^n$, $f(X \upharpoonright 0^n)$ contains the pairs described below. For every $v$ which is the value of some element of $\QUERY$, 
\begin{enumerate}
\item If $vz_k \notin \QUERY$, then put $(vz_1 , 0)$,...,$(vz_{(k-1)} , 0)$ into $f$. In other words,
$f$ forces $G[vz_l]$ to $0$ for every $1 \le l \le k-1$.
\item If $vz_k \in \QUERY$ then there must be some $1 \le l \le k-1$ such that $vz_l \notin \QUERY$. 
In this case $f$ forces $G[vz_i] = 0$ for every $1 \le i \le k-1$ except for $i = l$ for which $G[vz_l] = 1$.
\end{enumerate}
It can be shown that if $G$ meets $f$ at $0^n$, i.e. if \eqref{meeting conditions} holds, then 
$\alpha (w) = A[w]$ for every $w \in \QUERY$. As a result, 
\begin{equation}\label{last}
M^{\alpha} (0^n) = M^A (0^n)
\end{equation}
To complete the diagonalization, we add one more pair to $f$ that forces the value of $G_k[0^n] = G[0^{n+k}]$ to $1 - M^{\alpha} (0^n)$, i.e. $(0^{n+k} , 1 - M^{\alpha} (0^n))$. Then it follows from \eqref{last}
 that the reduction from $G_k$ to $A$ fails. The last part of the proof, is to show that $G$ has to meet $f$ at $0^n$ for some $n$. $\alpha$ is computable in $O(2^n)$ steps for short queries, and constant time for long queries, and $M$ is a polynomial time Turing machine, which implies $f$ can be computed in 
 at most $O(2^{2n})$ steps. It is also easy to see that the number of
 pairs in $f$ is bounded by $k^2$, which means $f$ is a bounded
 extension function. As a result $f$ satisfies the conditions of Lemma
 \ref{extension function revised}, hence $G$ has to meet $f$ at $0^n$ for some $n$, which completes the proof.
\end{proof}

Now we show the same sets separate $k$-$\tt$-autoreducibility from $(k-1)$-$\T$-autoreducibility.

\begin{theorem}\label{th:k-tt vs (k-1)-T autored}
If $\NP \cap \coNP$ contains a \pgeneric set, then for every $k \geq 3$ there exists a $k$-$\tt$-complete 
set for $\NP$ that is $k$-$\tt$-autoreducible, but is not $(k-1)$-$\T$-autoreducible.
\end{theorem}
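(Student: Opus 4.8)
The plan is to take the same sets $G \in \NP \cap \coNP$ and $A$ from the proof of Theorem~\ref{th:k-tt vs (k-1)-T complete}, and reuse essentially the same machinery. Recall that $A$ is $k$-$\tt$-complete for $\NP$ via a length-increasing reduction from $\SAT$ (making queries $xz_1,\ldots,xz_k$), so by Lemma~\ref{le:honest auto} $A$ is $k$-$\tt$-autoreducible. What remains is to show $A$ is not $(k-1)$-$\T$-autoreducible. So I would assume for contradiction that $A$ is $(k-1)$-$\T$-autoreducible via an oracle machine $M$ with $A[x] = M^A(x)$ and $M$ never queries $x$ on input $x$, and derive a contradiction with the \pgenericity of $G$.

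The key point is that the diagonalization in Theorem~\ref{th:k-tt vs (k-1)-T complete} already only needs to control the behavior of $M$ on input $0^n$ (there it was a reduction from $G_k$ to $A$, and $G_k[0^n] = G[0^{n+k}]$). For the autoreducibility version, I would instead target the input $x = 0^{n+k}$ directly: since $A[0^{n+k}]$ is a string of the form $vz_p$ with $v = 0^n$ and $p = 1$ (recalling $z_1 = 0^k$), and $A[0^{n}z_1] = G[0^n z_1] \wedge \SAT[0^n] $ — wait, more carefully, $A$ agrees with $G \cap \SAT$-type values on the $z_1,\ldots,z_{k-1}$ strands, so $A[0^{n+k}]$ depends only on $G[0^{n+k}]$ (with $\SAT[0^n]$ trivially true for large $n$, or one handles $\SAT[0^n]$ as a fixed easily-decidable value). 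Thus forcing $G[0^{n+k}]$ lets me force $A[0^{n+k}]$ to any desired value. I would then define $\alpha$ exactly as in Theorem~\ref{th:k-tt vs (k-1)-T complete} (short queries computed from $X \upharpoonright 0^n$ and $\SAT[v]$ in $2^{O(|v|)}$ time, long queries set to $0$), run $M$ with oracle $\alpha$ on input $0^{n+k}$ to obtain $\QUERY$ with $|\QUERY| \le k-1$, and define the bounded extension function $f$ by the same two-case split on each value $v$ of a query: if $vz_k \notin \QUERY$ force all $G[vz_i]=0$ for $i \le k-1$; if $vz_k \in \QUERY$ there is some $l \le k-1$ with $vz_l \notin \QUERY$, force $G[vz_l]=1$ and the rest to $0$. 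This guarantees $A[w]=0=\alpha(w)$ on all long queries, hence $M^\alpha(0^{n+k}) = M^A(0^{n+k})$ when \eqref{meeting conditions} holds. Finally $f$ adds the pair $(0^{n+k}, 1 - M^\alpha(0^{n+k}))$, and since $M$ does not query $0^{n+k}$ itself — and the forced string $0^{n+k}$ is never among $\QUERY$ by the "$g_i(x)\neq x$" property — this refutes the autoreduction.

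The main obstacle, as in the completeness proof, is the bookkeeping around the case where both $vz_k$ and some $vz_l$ are forced but $v$ is much longer than $0^{n+k}$: we must never force a value into $G$ that we would then need to read back to compute $\alpha$ within the $O(2^n)$ (really $O(2^{2n})$) time budget of Lemma~\ref{extension function revised}. This is exactly handled by the observation that with only $k-1$ queries, whenever $vz_k \in \QUERY$ at least one $vz_l$ $(l \le k-1)$ is \emph{not} queried, so the forced string $vz_l$ is never in $\QUERY$ and $\alpha$ never needs its value. One checks the number of forced pairs is $O(k^2)$ so $f$ is a bounded extension function, $\alpha$ is computable in $O(2^n)$ steps, and $M$ runs in polynomial time, so $f$ meets the hypotheses of Lemma~\ref{extension function revised}; density of $f$ along $G$ follows because the construction applies for every $n$. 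Hence $G$ meets $f$ at some $0^{n+k}$, giving the contradiction. I would write this out briefly, pointing to Theorem~\ref{th:k-tt vs (k-1)-T complete} for the details that transfer verbatim.
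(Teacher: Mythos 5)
Your high-level plan is the right one — same $G$ and $A$ as Theorem~\ref{th:k-tt vs (k-1)-T complete}, target the input $0^{n+k}$, define $\alpha$ and a bounded extension function $f$, and invoke Lemma~\ref{extension function revised} — and that is indeed the skeleton of the paper's proof. However, there are two genuine gaps at exactly the places you wave at.

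First, you assert that $\SAT[0^n]$ is ``trivially true for large $n$, or one handles $\SAT[0^n]$ as a fixed easily-decidable value.'' Neither claim resolves the problem. Whether $0^n \in \SAT$ is entirely encoding-dependent and need not be true for large $n$; and ``easily decidable'' does not help, because when $\SAT[0^n]=0$ we have $A[0^{n+k}] = G[0^{n+k}] \wedge \SAT[0^n] = 0$ \emph{regardless} of what $f$ forces into $G$, so the diagonalizing pair $(0^{n+k},\, 1 - M^{\alpha}(0^{n+k}))$ cannot make $A[0^{n+k}]$ disagree with $M$ whenever $M$ outputs $0$. The paper deals with this by defining $f$ only at stages where $\SAT[0^n]=1$: if that happens infinitely often, $f$ is dense along $G$ and the argument goes through; if $\SAT[0^n]=0$ for almost all $n$, the set $A$ is \emph{redefined} with $G_m \cup \SAT$ in place of $G_m \cap \SAT$ so that $A[0^{n+k}]$ is under control in the complementary regime. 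This case split is essential, not a presentational detail.

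Second, reusing $\alpha$ ``exactly as in Theorem~\ref{th:k-tt vs (k-1)-T complete}'' — with $\alpha(w)=0$ for all long queries — does not work once the target shifts from $G_k[0^n]$ to $A[0^{n+k}]=A[0^n z_1]$. In the completeness proof it was harmless to force $G[0^n z_p]=0$ for all $p \le k-1$; but here the diagonalizing pair also sets $G[0^n z_1]$, and if the forced value is $0$ while the remaining $G[0^n z_p]$ ($2\le p\le k-1$) are all forced to $0$, then $A[0^n z_k] = \big[\bigwedge_{l=1}^{k-1}(1-G[0^n z_l])\big]\wedge \SAT[0^n] = 1$ when $\SAT[0^n]=1$, which contradicts $\alpha(0^n z_k)=0$ if $0^n z_k$ is queried. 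The paper avoids this by using a different $\alpha$ on the $v=0^n$ strand for long queries (namely $\alpha(0^n z_p)=1$ for $2\le p\le k-1$ and $\alpha(0^n z_k)=0$) and correspondingly forcing $G[0^n z_p]=1$ for $2\le p\le k-1$, so that $A[0^n z_k]=0$ holds no matter what value the diagonalization assigns to $G[0^{n+k}]$. Your proposal as written would fail exactly on those inputs where $M$ queries $0^n z_k$.
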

\begin{proof}
 Assume $G \in \NP \cap \coNP$ is \pgeneric, and let $G_m = \{x \;|\; xz_m \in G \}$ 
for $1 \le m \le k$ where $z_1,...,z_k$ are the first $k$ strings of length $k$ as before.
Define 
\begin{equation}
A = \Big[\bigcup _{m = 1}^{k-1} {\{xz_m \;|\; x \in G_m \cap \SAT \}}\Big] \cup 
    \{xz_k \;|\; x \in \big[ \cap _{m = 1}^{k-1}{\overline{G_m}} \big] \cap \SAT \}
\end{equation}
We showed that $\SAT \le_{k-\tt}^\p A$ via length-increasing queries, therefore by Lemma \ref{le:honest auto} $A$ is $k$-$\tt$-autoreducible.
For a contradiction, assume that $A$ is $(k-1)$-$\T$-autoreducible. This means there exists an oracle Turing machine $M$ such that 
\begin{equation}
\forall x.\; A[x] = M^A(x)
\end{equation}
$M$ runs in polynomial time, and on every input $x$ makes at most
$k-1$ queries, none of which is $x$. Given $n$ and 
$X \upharpoonright 0^n$, we define a function $\alpha$ as follows.\\
If $w = vz_p$ and $|w| < n+k$ then
\[ \alpha(w) = \begin{cases} 
      X[w] \wedge \SAT[v] & \textrm{ if $1 \le p \le k-1$} \\
      \big[\bigwedge_{l=1}^{k-1} {(1 - X[vz_l])} \big] \wedge \SAT[v] & 
      \textrm{ if $p = k$} \\
      0 & \textrm{otherwise}\\
   \end{cases} \]
It is easy to see that if $X \upharpoonright 0^n = G \upharpoonright 0^n$ then $\alpha (w) = A[w]$.\\
If $w = vz_p$ and $|w| \geq n+k$, $\alpha$ is defined as:
\[ \alpha(w) = \begin{cases} 
      1 & \textrm{ if $v = 0^n \wedge 2 \le p \le k-1$} \\
      0 & \textrm{ if $v = 0^n \wedge p = k$} \\
      0 & \textrm{otherwise}\\
   \end{cases} \]
Note that $\alpha$ is not defined on $0^{n+k}$, but that is fine because we are using $\alpha$ to compute 
$A[w]$ for $w$'s that are queried when the input is $0^{n+k}$, therefore $0^{n+k}$ will not be queried. Later we will define the extension function $f$ in a way that if $G$ meets $f$ at $0^n$ then $\alpha(w) = A[w]$ for all queries.\\
Before defining $f$, we run $M$ on input $0^{n+k}$ with $\alpha$ as the oracle instead of $A$, and define $\QUERY$ 
to be the set of all queries made in this computation. We know that $M$ makes at most $k-1$ queries, therefore 
$|\QUERY| \le k-1$. This implies that for every $v \neq 0^n$ which is the value of some element of $\QUERY$ one of the following cases must happen:
\begin{enumerate}
\item $vz_k \notin \QUERY$
\item $vz_k \in \QUERY$ and $\exists 1 \le l \le k-1 \;.\; vz_l \notin \QUERY$
\end{enumerate}
Given $n$ and $X \upharpoonright 0^n$, $f(X \upharpoonright 0^n)$ is defined as follows if $\SAT[0^n] = 1$.\\
For every $v$ which is the value of some element of $\QUERY$,
\begin{enumerate}
\item If $v = 0^n$, then add $(vz_2,1),...,(vz_{k-1},1)$ to $f$. In other words, $f$ forces $G[0^nz_i] = 1$ for $2 \le i \le k-1$.
\item If $v \neq 0^n$ and $vz_k \notin \QUERY$, then add $(vz_1,0),...,(vz_{k-1},0)$ to $f$.
\item If $v \neq 0^n$ and $vz_k \in \QUERY$, then there must be some $1 \le l \le k-1$ such that $vz_l \notin \QUERY$.
In this case $f$ forces $G[vz_i] = 0$ for every $1 \le i \le k-1$ except when $i = l$ for which we force $G[vz_l] = 1$.
\end{enumerate} 
To complete the diagonalization we add one more pair to $f$ which is $(0^{n+k} , 1 - M^{\alpha}(0^n))$. It is straightforward, and similar to what has been done in the previous theorem, to show that if $G$ meets $f$ at $0^n$ for some $n$ then $\alpha$ and $A$ agree on every element of $\QUERY$. Therefore $M^{\alpha}(0^n) = M^A(0^n)$, which results in a contradiction. 
It only remains to show that $G$ meets $f$ at $0^n$ for some $n$. This
depends on the details of the encoding used for $\SAT$. If $\SAT[0^n] = 1$ for infinitely many $n$'s, then $f$ satisfies the conditions in Lemma \ref{extension function revised}. Therefore $G$ has to meet $f$ at $0^n$ for some $n$. On the other hand, 
if $\SAT[0^n] = 0$ for almost all $n$, then we redefine $A$ as:
\begin{equation}
A = \Big[\bigcup _{m = 1}^{k-1} {\{xz_m \;|\; x \in G_m \cup \SAT \}}\Big] \cup 
    \{xz_k \;|\; x \in \big[ \cup _{m = 1}^{k-1}{\overline{G_m}} \big] \cup \SAT \}
\end{equation}
It can be proved, in a similar way and by using the assumption that $\SAT[0^n] = 0$ for almost all $n$, that $A$ is $k$-$\tt$-complete, $k$-$\tt$-autoreducible, but not $(k-1)$-$\T$-autoreducible. 
\end{proof}

\section{Conclusion}\label{sec:conclusion}

We conclude with a few open questions. 

For some $k$, is there a $k$-tt-complete set for $\NP$ that is not
btt-autoreducible? We know this is true for $\EXP$ \cite{BFvMT00}, so
it may be possible to show under a strong hypothesis on $\NP$.  We
note that by Lemma \ref{le:honest auto} any construction of a
$k$-tt-complete set that is not $k$-tt-autoreducible must not be
honest $k$-tt-complete. In fact, the set must be complete under
reductions that are neither honest nor dishonest. On the other hand,
for any $k \geq 3$, proving that all $k$-tt-complete sets for $\NP$
are btt-autoreducible would separate $\NP \neq \EXP$.

Are the 2-tt-complete sets for $\NP$ 2-tt-autoreducible? The answer to
this question is yes for $\EXP$ \cite{BuhTor05}, so in this case a
negative answer for $\NP$ would imply $\NP \neq \EXP$. We believe that
it may be possible to show the 2-tt-complete sets are nonuniformly
2-tt-autoreducible under the Measure Hypothesis -- first show they are
nonuniformly 2-tt-honest complete as an extension of
\cite{Hitchcock:CRNPCS,BuhrmanHescottHomerTorenvliet:NonuniformReductions}.

Nguyen and Selman \cite{NguyenSelman:STACS14} showed there is
T-complete set for $\NEXP$ that is not tt-autoreducible. Can we do
this for $\NP$ as well? Note that Hitchcock and Pavan
\cite{Hitchcock:CRNPCS} showed there is a T-complete set for $\NP$
that is not tt-complete.

{\bf Acknowledgment.} We thank A. Pavan for extremely helpful discussions.

\bibliographystyle{plain} 
\bibliography{\INC final}

\end{document}